\definecolor{darklavender}{rgb}{0.30, 0.21, 0.39}
\definecolor{amber}{rgb}{1.0, 0.75, 0.0}
\definecolor{lavenderindigo}{rgb}{0.58, 0.34, 0.92}
\tikzstyle{vertex}=[circle,fill=black,minimum size=7pt,inner sep=0pt]
\tikzstyle{bigvertex}=[circle,draw,thick,fill=black!5,minimum size=16pt,inner sep=0pt]
\title{Coherent Interaction Graphs}
\author{{{\fontencoding{T5}\selectfont Lê Thành Dũng \textsc{Nguyễn}}\thanks{Part
    of this work was carried out when the first author was a student at École
    normale supérieure de Paris.}}\;\thanks{Partially
  supported by the ANR project ELICA
  (ANR-14-CE25-0005).}\qquad\qquad\qquad Thomas \textsc{Seiller}\footnotemark[2]
  \institute{Laboratoire d'informatique de Paris Nord (LIPN) \qquad\quad CNRS, UMR 7030 -- LIPN\\
  Université Paris 13, Sorbonne Paris Cité, Villetaneuse, France}
\email{nltd@nguyentito.eu \qquad\qquad\qquad\qquad\qquad\quad seiller@lipn.fr}
}
\begin{document}
\maketitle

\begin{abstract}
  We introduce the notion of coherent graphs, and show how those can be used to
  define dynamic semantics for Multiplicative Linear Logic (MLL) extended with
  non-determinism. Thanks to the use of a coherence relation rather than mere
  formal sums of non-deterministic possibilities, our model enjoys some
  finiteness and sparsity properties.

  We also show how studying the semantic
  types generated by a single \enquote{test} within our model of MLL naturally
  gives rise to a notion of proof net, which turns out to be exactly Retoré's
  R\&B-cographs. This revisits the old idea that multplicative proof net
  correctness is interactive, with a twist: types are characterized not by a set
  of counter-proofs but by a single non-deterministic counter-proof.
\end{abstract}

\theoremstyle{plain}
\newtheorem{theorem}{Theorem}
\newtheorem{proposition}[theorem]{Proposition}
\newtheorem{corollary}[theorem]{Corollary}
\newtheorem{lemma}[theorem]{Lemma}
\newtheorem{conjecture}{Conjecture}

\theoremstyle{definition}
\newtheorem{definition}[theorem]{Definition}

\theoremstyle{remark}
\newtheorem{remark}[theorem]{Remark}
\newtheorem{example}[theorem]{Example}

\newcommand{\plugging}{\mathbin{\square}} 
\newcommand{\symmdiff}{\mathbin{\triangle}} 
\newcommand{\execution}{\mathbin{::}} 
\newcommand{\Cyc}{\mathrm{Cyc}}
\newcommand{\cohspace}[1]{\mathrm{Coh}(#1)}
\newcommand{\MLL}{{\sc mll}\xspace}
\newcommand{\project}[1]{\mathfrak{#1}}
\newcommand{\graph}[1]{\mathrm{G}(\project{#1})}
\newcommand{\cycles}[1]{\mathrm{C}(\project{#1})}
\newcommand{\coherence}[1]{\coh_{\project{#1}}}
\newcommand{\coherencespace}[1]{\cohspace{\project{#1}}}
\newcommand{\CycProject}[1]{\Cyc(\project{#1})}
\newcommand{\cond}[1]{\mathbf{#1}}

\newcommand{\MEcoh}[1][\Omega]{\mathrm{cohM}_{\mathrm{fin}}(#1)}
\newcommand{\mfin}[1]{\mathrm{M}_{\mathrm{fin}}(#1)}
\newcommand{\incohplus}{\mathbin{\boxplus}}

\section{Introduction}

Dynamic semantics of proofs originate in Girard's \emph{Geometry of
  Interaction \emph{(GoI)} program}~\cite{Girard-towards}, whose aim was to
provide a semantic account for the process of cut-elimination. Indeed, while
the proofs-as-programs correspondence expresses that $\beta$-reduction of
lambda-terms corresponds to cut-elimination, its extension to categorical models
(the so-called Curry--Howard--Lambek correspondence) arguably fails to fully
reflect these dynamics, as it represents those operations as a simple equality.

Since then, GoI has been developed into many directions to give various accounts
of cut-elimination without syntactic rewriting. One aspect which has received
less attention is the construction of models of linear logic where morphisms
are generalized programs on which a GoI-style execution procedure can be
defined, and serves to compose morphisms. Girard's successive papers on GoI all
investigate such models, using operator algebras as the spaces of generalized
programs; recently, Seiller has simplified Girard's models using graphings (a
measure-theoretic extension of graphs) instead of operators~\cite{Seiller2017}.
Note that discrete graphs suffice to obtain a model of linear logic without
exponentials or quantifiers~\cite{Seiller2012,Seiller2016}.

The aforementioned models all start from an untyped universe of programs, and
interpret formulae as specifications for the behavior of programs: they can be
seen as realizability models with operators/graphs as realizers. Futhermore,
these specifications will be given as batteries of \enquote{tests} or
\enquote{counter-proofs} which are themselves generalized programs. 

This idea comes from the theory of multiplicative proof nets, where correct
proofs have to be distinguished out of a set of \enquote{proof structures} by
means of a \emph{correctness criterion}. Girard observed that by representing a
proof structure as a permutation $\sigma$, he could express the correctness
criterion as the cyclicity of $\sigma\tau_i$ for all $i$, where $\{\tau_i \mid i
\in I\}$ is a set of permutations depending on the formula being proved. This
led to a prototypical GoI model~\cite{multiplicatives}, limited to
Multiplicative Linear Logic (MLL), with permutations as programs. This model
internalizes \enquote{the proof $\sigma$ passes the test $\tau$} as a symmetric
relation $\sigma \perp \tau$, defined as \enquote{$\sigma\tau \text{ cyclic}$}.
In general, in GoI models, semantic types are defined using this kind of
symmetric relation, called an \emph{orthogonality}.

\paragraph{Contents of the paper.}

In this paper, we revisit this connection between proof net correctness and GoI
orthogonality in the MLL fragment. One issue is that, for a given MLL formula, 
the number of corresponding tests is exponential in the size of the formula -- and 
by replacing the formula by a huge test of sets, we have forgotten that they all
come from the same succinct data. We argue that a formula should correspond to a
\emph{non-deterministic superposition} of tests, and propose a variant of
Seiller's interaction graphs which admits \emph{sparse} representations of
non-deterministic sums of programs; the main ingredient is a \emph{coherence
  relation} between edges, hence the title. Instead of an exponential family of
tests, we get a single non-deterministic test of polynomial size.

Remarkably, this development -- which does not depend on knowing a priori
anything about MLL correctness criteria -- leads us to recover a previously
known but rarely used version of proof nets, introduced by
Retoré~\cite{retore2003}, together with its correctness criterion. Let us note 
that Ehrhard suggested~\cite{ehrhard2014} exploring connections between 
this criterion and GoI. This result which provides a correspondence between
cographs and formulae, suggest further work providing a tighter correspondence
between proof and graph theory by exploiting the notion of \emph{modular 
decomposition} of graphs (see e.g.~\cite{modular}).

The coherence allows for the representation of sums of programs, and therefore
provides a GoI account of MLL+\textsc{Sum}. This suggests extensions of
the framework to e.g. Differential Linear Logic (with or without promotion) \cite{ehrhard2006,ehrhard2017} which requires the use of the \textsc{Sum}.
But this is not the only possible use of a coherence relation: we will also 
describe in this paper how one can, in coherent interaction graphs, express a 
restriction to \emph{simple} paths -- i.e.\ paths visiting each vertex at most once 
-- to ensure that our generalized programs remain finite objects when they are
composed together. This was in fact the motivation for coherent graphs when they
were originally introduced in Seiller's PhD thesis~\cite{seillerthese}.

\paragraph{Notations.}

We write $\mathcal{P}(A)$ for the powerset of $A$. We distinguish two notions of
disjoint union: $A \sqcup B$ is $A \cup B$ with the implied condition that $A
\cap B = \emptyset$, while $A + B$ is the coproduct in the category of sets (one
may define $A + B = \{0\} \times A \sqcup \{1\} \times B$).

By a graph $G$, we mean a directed multigraph, i.e.\ a set of vertices $V(G)$
and a set of edges $E(G)$ with source and target maps $s_G, t_G : E(G) \to
V(G)$; $s_G \times t_G$ need not be injective. We write $G \equiv G'$ when $V(G)
= V(G')$ and there is a bijection $\varphi : E(G) \to E(G')$ such that $s_G =
s_{G'} \circ \varphi$, $t_G = t_{G'} \circ \varphi$, that is, when there is a
graph isomorphism between $G$ and $G'$ acting as the identity on vertices. This
is the right notion of \enquote{equality} (not merely isomorphism!) of graphs
for our purpose. We denote $\emptyset_V$ the graph with no edges on the vertex 
set $V$.

A \emph{path} is a finite sequence $v_0, e_1, v_1, \ldots, e_n, v_n$ with $v_i
\in V(G)$, $e_j \in E(G)$, $s_G(e_i) = v_{i-1}$, $t_G(e_i) = v_i$. Cycles are defined 
as paths $v_0, e_1, v_1, \ldots, e_n, v_n$ with $v_n=v_0$. 

Two edges $e$ and $f$ 
are \emph{incident} if they have a common endpoint. This does not depend on the 
directions of $e$ and $f$.

\section{From execution of matchings to interaction graphs}
\label{sec:ig}

In this section, we recall the second author's previous work on interaction graphs 
\cite{Seiller2012,Seiller2016}, in a particular case (empty pole), and omitting proofs.

\subsection{Multiplicative Linear Logic proofs as matchings}

Let $F$ be a formula of (unit-free) MLL. Any cut-free proof of $F$ determines a
fixed-point-free involution -- the $\sigma$ from the introduction -- on the
atoms of $F$, exchanging two occurrences of atoms when they were introduced by
the same axiom rule. It is well-known that for most purposes, one may in fact
forget about the syntactic proof and work only with the involution -- in fact
this is one way of looking at the theory of MLL proof nets. In particular,
cut-elimination can be carried out directly on involutions \cite{Seiller-Axioms}.

To do so, it is more convenient to see these fixed-point-free involutions as
\emph{matchings}, by which we mean\footnote{Properly speaking, these are
  actually called \emph{1-regular graphs} in graph theory. The usual meaning of
  \enquote{matching} (resp.\ \enquote{perfect matching}) is a subset of edges of
  a given graph such that each vertex is incident to at most (resp.\ exactly)
  one matching edge -- that is, perfect matchings correspond to 1-regular
  \emph{subgraphs}.} undirected graphs whose vertices all have degree 1: the
vertices are the atoms, and the edges are given by the orbits of the involution.
Given two cut-free proofs $\pi : (\vdash \Gamma, F)$ and $\pi' : (\vdash
F^\perp, \Delta)$, inducing the respective matchings $M$ and $M'$, to find the
matching for $\mathbf{cut}(\pi,\pi')$, we first \emph{plug} $M$ and $M'$
together:

\begin{definition}
  Let $G, H$ be two graphs; their vertex sets $V(G)$ and $V(H)$ might not
  be disjoint. $G \plugging H$ then denotes the graph $(V(G) \cup V(H), E(G)
  + E(H))$, equipped with the partition $\{E(G), E(H)\}$ of its edge set,
  in order to remember the provenance of edges.

  A path in $G \plugging H$ is \emph{alternating} if, for any two consecutive
  edges, exactly one of them is in $E(G)$.
\end{definition}
\begin{remark}
  There is an obvious notion of ternary plugging of graphs, with a
  tripartition of its edge set, which will also turn out to be useful
  later. It will be denoted $F \plugging G \plugging H$ for the graphs $F,G,H$.
\end{remark}

Here, $V(M) \cap V(M') \neq \emptyset$ because we identify the vertices
corresponding to atoms in $F$ with their duals in $F^\perp$. Translating
cut-elimination on the level of matchings just consists of following the
\emph{alternating paths} in the resulting graph $M \plugging M'$: see the example
in Figure~\ref{fig:ig}. This operation leads to the definition of a graph whose 
edges are exactly the alternating paths between $M$ and $M'$; the latter will be 
called the \emph{execution} of $M$ and $M'$, and denoted as $M \execution M'$
(Definition \ref{def:exec}).

The alternation condition can be justified through an analogy with game
semantics. An involution $\sigma$ of a set $S$ can be thought of as a
\enquote{strategy} over the \enquote{arena} $S$: if my opponent plays $x$, I
answer with $\sigma(x)$ -- in other words, I follow the incident edge of the
matching and answer with the other endpoint. Then the condition reflects an
alternation of moves between two players; execution is \enquote{composition}
(taking $G \plugging H$ and following paths) and \enquote{hiding} (keeping only
the symmetric difference $V(G) \symmdiff V(H)$).

In general, plugging two matchings results in an undirected graph whose vertices
have degree 1 or 2, i.e.\ a disjoint union of alternating paths and cycles. When
restricting to interpretations of MLL proofs, \emph{cycles never appear}. (This
corresponds to the strong normalization property for MLL proof nets.)

\begin{figure}
  \centering
  \begin{subfigure}{9cm}
    \centering
    \begin{prooftree}
      \AxiomC{}
      \UnaryInfC{$\vdash {\color{red}A^\bot, A}$}
      \AxiomC{}
      \UnaryInfC{$\vdash {\color{blue}A^\bot, A}$}
      \BinaryInfC{$\vdash {\color{red}A^\bot}, {\color{blue}A^\bot},
        {\color{red}A} \otimes {\color{blue}A}$}
      \UnaryInfC{$\vdash {\color{red}A^\bot} \parr {\color{blue}A^\bot},
        {\color{red}A} \otimes {\color{blue}A}$}
    
      \AxiomC{}
      \UnaryInfC{$\vdash {\color{amber}A^\bot, A}$}
      \AxiomC{}
      \UnaryInfC{$\vdash {\color{darklavender}A^\bot, A}$}
      \BinaryInfC{$\vdash {\color{darklavender}A^\bot}, {\color{amber}A^\bot},
        {\color{amber}A} \otimes {\color{darklavender}A}$}
      \UnaryInfC{$\vdash {\color{darklavender}A^\bot} \parr {\color{amber}A^\bot},
        {\color{amber}A} \otimes {\color{darklavender}A}$}
      \RightLabel{(Cut)}
      \BinaryInfC{$\vdash {\color{red}A^\bot} \parr
        {\color{blue}A^\bot}, {\color{amber}A} \otimes {\color{darklavender}A}$}
    \end{prooftree}
      \begin{tikzpicture}
        \node[vertex,fill=red] (a) at (0,0.8) {};
        \node[vertex,fill=blue] (b) at (1,0.8) {};
        \node[vertex,fill=red] (c) at (2,0.8) {};
        \node[vertex,fill=blue] (d) at (3,0.8) {};
        \draw[thick] (a) to[bend left] (c);
        \draw[thick] (b) to[bend left] (d);

        \node[vertex,fill=darklavender] (c') at (2,0) {};
        \node[vertex,fill=amber] (d') at (3,0) {};
        \node[vertex,fill=amber] (e') at (4,0) {};
        \node[vertex,fill=darklavender] (f') at (5,0) {};
        \draw[thick] (c') to[bend right] (f');
        \draw[thick] (d') to[bend right] (e');

        \draw[dotted, thick] (c) -- (c');
        \draw[dotted, thick] (d) -- (d');
      \end{tikzpicture}
    \end{subfigure}~~~~~~~~~{\LARGE $\leadsto$}~\begin{subfigure}{5cm}
      \centering
          \begin{prooftree}
      \AxiomC{}
      \UnaryInfC{$\vdash {\color{red}A^\bot},{\color{darklavender}A}$}
      \AxiomC{}
      \UnaryInfC{$\vdash {\color{blue}A^\bot},{\color{amber}A}$}
      \BinaryInfC{$\vdash {\color{red}A^\bot}, {\color{blue}A^\bot},
        {\color{amber}A} \otimes {\color{darklavender}A}$}
      \UnaryInfC{$\vdash {\color{red}A^\bot} \parr
        {\color{blue}A^\bot}, {\color{amber}A} \otimes {\color{darklavender}A}$}
    \end{prooftree}
      \begin{tikzpicture}
        \node[vertex,fill=red] (a) at (0,0) {};
        \node[vertex,fill=blue] (b) at (1,0) {};

        \node[vertex,fill=amber] (e') at (2,0) {};
        \node[vertex,fill=darklavender] (f') at (3,0) {};

        \draw[thick] (a) to[bend left] (f');
        \draw[thick] (b) to[bend left] (e');
      \end{tikzpicture}
    \end{subfigure}
    \caption{Cut-elimination reflected on matchings}
    \label{fig:ig}
\end{figure}

\subsection{Interaction graphs in a nutshell}

Let us now generalize by considering arbitrary \emph{directed graphs}
instead of matchings.
\begin{definition}
  \label{def:exec}
  The \emph{execution} $G \execution H$ of two graphs $G, H$ 
  has as vertex set the symmetric difference $V(G) \symmdiff V(H)$ and contains,
  for each alternating path from $u$ to $v$ in $G \plugging H$, an edge from $u$
  to $v$ ($u, v \in V(G\execution H)$).
\end{definition}

Execution defines a well-behaved composition for graphs thanks to this
associativity property:

\begin{proposition}
  If $V(F) \cap V(G) \cap V(H) = \emptyset$, then $(F::G)::H \equiv F::(G::H)$.
\end{proposition}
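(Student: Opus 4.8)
The plan is to reduce both sides to a common object: paths in the ternary plugging $F \plugging G \plugging H$, whose tripartite edge set was noted in the Remark. First observe that the vertex sets already coincide: since $\symmdiff$ is associative, $V((F\execution G)\execution H) = (V(F)\symmdiff V(G))\symmdiff V(H) = V(F)\symmdiff(V(G)\symmdiff V(H)) = V(F\execution(G\execution H))$, so it remains to produce a bijection on edges preserving source and target.

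The heart of the argument is a bracketing-independent description of edges. Call a path in $F\plugging G\plugging H$ \emph{legal} if (i) each of its two endpoints lies in exactly one of $V(F), V(G), V(H)$, and (ii) each internal vertex $v$ lies in exactly two of these sets and its two incident path-edges come precisely from those two sides. The hypothesis $V(F)\cap V(G)\cap V(H)=\emptyset$ is exactly what makes this clean: no vertex lies in all three, so $V(F)\symmdiff V(G)\symmdiff V(H)$ is precisely the set of vertices belonging to a single side, and any vertex incident to two path-edges of distinct provenance automatically lies in exactly two sets.

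I would then show that edges of $(F\execution G)\execution H$ are in bijection with legal paths via \emph{unfolding}: such an edge is an alternating path in $(F\execution G)\plugging H$, each of whose $(F\execution G)$-edges is itself an alternating path in $F\plugging G$; replacing every $(F\execution G)$-edge by its underlying $F\plugging G$-alternating path yields a path in $F\plugging G\plugging H$. One checks this path is legal: internal vertices of the $F\plugging G$-segments lie in $V(F)\cap V(G)$ (hence, by the hypothesis, not in $V(H)$) with incident $F$- and $G$-edges, while the junction vertices carrying an $H$-edge lie in $(V(F)\symmdiff V(G))\cap V(H)$, again exactly two sides. The inverse \emph{folds} a legal path by cutting it at its $H$-edges into maximal $F/G$-subpaths: condition (ii) forbids consecutive same-side edges, so each maximal $F/G$-subpath is genuinely $F\plugging G$-alternating with endpoints in $V(F)\symmdiff V(G)$, i.e.\ a legitimate $(F\execution G)$-edge, and the intervening $H$-edges restore the alternation in $(F\execution G)\plugging H$. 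Both maps manifestly preserve the two endpoints, and they are mutually inverse because the alternation in $(F\execution G)\plugging H$ guarantees that distinct $(F\execution G)$-edges are always separated by an $H$-edge, so \enquote{maximal $F/G$-subpath} recovers exactly the unfolded pieces.

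Since the ternary plugging and the notion of legal path are symmetric in the bracketing, the identical argument gives a source/target-preserving bijection between edges of $F\execution(G\execution H)$ and legal paths. Composing the two bijections yields the required $\varphi : E((F\execution G)\execution H)\to E(F\execution(G\execution H))$ with $s = s\circ\varphi$ and $t = t\circ\varphi$, whence $(F\execution G)\execution H \equiv F\execution(G\execution H)$. The main obstacle is establishing that unfolding and folding are mutually inverse, i.e.\ that the maximal $F/G$-subpath decomposition of a legal path reconstitutes exactly the $(F\execution G)$-edge structure; this is precisely where triple-disjointness is indispensable, since a vertex lying in all three sets could serve ambiguously as either an internal vertex or an endpoint and would break both the legality characterization and its symmetry.
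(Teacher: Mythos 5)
Your proof is correct and takes essentially the same route as the paper: the paper's own (one-sentence) argument, given for the coherent analogue of this proposition, states exactly that both bracketings compute the graph on $V(F) \symmdiff V(G) \symmdiff V(H)$ whose edges correspond to the alternating paths in the ternary plugging $F \plugging G \plugging H$. Your \enquote{legal paths} are precisely those alternating paths with endpoints in the symmetric difference, and your fold/unfold bijection simply makes explicit the identification the paper leaves implicit.
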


\begin{remark}
  Note that this is not just an isomorphism: on both sides of the $\equiv$ sign,
  the vertex set is the same, namely the union of $V(F)$, $V(G)$ and $V(H)$
  minus all pairwise intersections.
\end{remark}

A special case of execution allows us to represent the tensor product of MLL as
an operation on interaction graphs. For two graphs $G$ and $H$, if $V(G) \cap
V(H) = \emptyset$, then $G :: H$ is in fact a sort of disjoint union, with all
alternating paths having length~1. This defines the tensor product on graphs. 
 (In general, to ensure $V(G) \cap V(H) = \emptyset$, we
will resort to replacing $G$ and $H$ by isomorphic copies.)

\begin{definition}
For two graphs $G$ and $H$ such that $V(G) \cap V(H) = \emptyset$, we
define $G \otimes H$ as the graph $G \execution H$.
\end{definition}

This would be enough to build a degenerate model of MLL -- a compact closed
category -- with vertex sets as objects and graphs as morphisms. But instead, we
want a model in which $\parr \neq \otimes$.

\subsection{Building a model with orthogonality}

This requires something akin to the correctness criterion which distinguishes
$\parr$ from $\otimes$ in proof nets. In our case, it will be through a notion
of \emph{orthogonality}, accounting for linear negation at the level of graphs
-- as mentioned in the introduction (see also~\cite{Seiller2012,
  Seiller-Axioms}), this is related to correctness criteria. But our definition
of orthogonality can also be motivated by \enquote{interactive} considerations:
we saw earlier that plugging two matchings coming from MLL proofs with the right
types never led to cycles.

\begin{definition}
  For two graphs $G$ and $H$ with $V(G) = V(H)$, we write $G \perp H$ when $G
  \plugging H$ contains no alternating cycle. If $X$ is a set of graphs on a
  common vertex set, we define $X^\perp = \{ G \mid \forall H \in X,\, G \perp H
  \}$.
\end{definition}

In our model, the objects interpreting MLL types will be sets of graphs over a
common vertex set which can be written as $X^\perp$ for some $X$; they are
called \emph{conducts}\footnote{This word is a synonym of \enquote{behavior},
  meant to suggest that a type is a set of programs behaving in the same
  way.}. The idea is that $X$ is a set of tests, and the conduct $\cond{A} = X^\bot$
is the set of programs passing those tests. The condition $\exists X.\;\cond{A}
= X$ is equivalent to $\cond{A = A^{\bot\bot}}$; thus, $(-)^\perp$ can be used
as an involutive linear negation.

\begin{definition}
A \emph{conduct} $\cond{A}$  on a set $V(\cond{A})$ is a set of graphs with 
vertex set $V(\cond{A})$ such that $\cond{A}=\cond{A}^{\bot\bot}$.
\end{definition}


The constructions on graphs then lift to constructions on conducts.

\begin{definition}
Let $\cond{A}, \cond{B}$ be conducts with $V(\cond{A})\cap V(\cond{B})
=\emptyset$. We define $\cond{A\otimes B}=\{G \otimes H\mid G\in \cond{A}, 
H\in\cond{B}\}^{\bot\bot}$
\end{definition}

It is then natural to define the morphisms from $\cond{A}$ to $\cond{B}$ as 
the graphs belonging to the conduct $\cond{A \multimap B} = (\cond{A} \otimes 
\cond{B}^\bot)^\bot$. Unfolding this definition, one gets the realizability-style 
characterization of $\cond{A \multimap B}$ as the set of graphs mapping
elements of $\cond{A}$ to elements of $\cond{B}$.

\begin{proposition}
Let $\cond{A}, \cond{B}$ be conducts such that $V(\cond{A})\cap V(\cond{B})
=\emptyset$. 
\[\cond{A\multimap B}= \{F \mid \forall G \in \cond{A},\, F 
\perp G \otimes \emptyset_{V(\cond{B})} \text{ and } F \execution G \in 
\cond{B}\}\]
\end{proposition}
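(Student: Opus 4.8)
The plan is to unfold $\cond{A\multimap B}=(\cond{A}\otimes\cond{B}^\bot)^\bot$ and reduce the statement to a single decomposition lemma for alternating cycles. First I would use the Galois-connection identity $X^{\bot\bot\bot}=X^\bot$ (immediate from the definition of $(-)^\bot$), together with $\cond{A}\otimes\cond{B}^\bot=\{G\otimes H\mid G\in\cond{A},\,H\in\cond{B}^\bot\}^{\bot\bot}$, to rewrite $\cond{A\multimap B}=\{G\otimes H\mid G\in\cond{A},\,H\in\cond{B}^\bot\}^\bot$. Thus $F\in\cond{A\multimap B}$ iff $F\perp(G\otimes H)$ for all $G\in\cond{A}$ and $H\in\cond{B}^\bot$. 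Here $F$ necessarily has vertex set $V(\cond{A})\sqcup V(\cond{B})$, so $V(F)\symmdiff V(G)=V(\cond{B})$ and the clause $F\execution G\in\cond{B}$ on the right-hand side is well-typed.

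The crux is the following \emph{trefoil} lemma, refining the associativity of execution: for a graph $F$ on $V(\cond{A})\sqcup V(\cond{B})$, a graph $G$ on $V(\cond{A})$ and a graph $H$ on $V(\cond{B})$, one has $F\perp(G\otimes H)$ if and only if both $F\perp(G\otimes\emptyset_{V(\cond{B})})$ and $(F\execution G)\perp H$. To prove it I would classify the alternating cycles of the ternary plugging $F\plugging(G\otimes H)$ by whether they traverse an $H$-edge. Any $(G\otimes H)$-edge incident to a vertex of $V(\cond{B})$ must be an $H$-edge, since the $G$-edges live inside $V(\cond{A})$; hence an alternating cycle using no $H$-edge stays entirely within $V(\cond{A})$, and these are exactly the alternating cycles of $F\plugging(G\otimes\emptyset_{V(\cond{B})})$. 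Cutting a cycle that does use $H$-edges at those edges exhibits it as an alternation of $H$-edges with maximal alternating $F$-$G$ subpaths joining two vertices of $V(\cond{B})$; by Definition~\ref{def:exec} each such subpath is exactly one edge of $F\execution G$, and this contraction/expansion is a bijection between the cycles traversing at least one $H$-edge and the alternating cycles of $(F\execution G)\plugging H$. Consequently $F\plugging(G\otimes H)$ is acyclic iff both $F\plugging(G\otimes\emptyset_{V(\cond{B})})$ and $(F\execution G)\plugging H$ are, which is the lemma. I expect the verification of this bijection, especially its behaviour on degenerate cycles (loops, or cycles that alternate purely through $H$), to be the main technical obstacle.

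Granting the lemma, the two inclusions follow. For $\subseteq$, assume $F\in\cond{A\multimap B}$ and fix $G\in\cond{A}$: since $\emptyset_{V(\cond{B})}\in\cond{B}^\bot$, instantiating the membership condition at $H=\emptyset_{V(\cond{B})}$ gives $F\perp(G\otimes\emptyset_{V(\cond{B})})$ directly, while for every $H\in\cond{B}^\bot$ the hypothesis $F\perp(G\otimes H)$ yields $(F\execution G)\perp H$ by the lemma, so $F\execution G\in(\cond{B}^\bot)^\bot=\cond{B}^{\bot\bot}=\cond{B}$. For $\supseteq$, assume $F$ meets both conditions for all $G\in\cond{A}$ and fix $G\in\cond{A}$, $H\in\cond{B}^\bot$; then $F\perp(G\otimes\emptyset_{V(\cond{B})})$ by hypothesis, and $F\execution G\in\cond{B}=\cond{B}^{\bot\bot}$ gives $(F\execution G)\perp H$, so the lemma yields $F\perp(G\otimes H)$. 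The only external input is that $\emptyset_{V(\cond{B})}\in\cond{B}^\bot$, which holds because plugging an edgeless graph onto another one cannot create an alternating cycle.
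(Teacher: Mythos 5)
Your proposal is correct and takes essentially the same approach as the paper: unfold $\cond{A\multimap B}=(\cond{A}\otimes\cond{B}^\bot)^\bot$ to a single orthogonal via $X^{\bot\bot\bot}=X^\bot$, then conclude using $\emptyset_{V(\cond{B})}\in\cond{B}^\bot$, $\cond{B}^{\bot\bot}=\cond{B}$, and the tensor--execution adjunction. Your \enquote{trefoil lemma} is exactly the paper's Adjunction proposition (stated right after this one), and your classification of alternating cycles by whether they traverse an edge of $H$ is the same argument the paper gives when it proves that adjunction in the coherent-graph setting.
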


More precisely, the programs in $\cond{A \multimap B}$ are those which send 
all inputs in $\cond{A}$ to outputs in $\cond{B}$, with the side condition $F \perp 
G \otimes \emptyset_{V(\cond{B})}$ meaning that there must be no alternating 
cycle between the program and the possible inputs. This side condition also 
appears in the so-called adjunction between the tensor product and the execution:
\begin{proposition}[Adjunction]
  Let $F$, $G$ and $H$ be three graphs. Suppose that $V(F) = V(G) \sqcup V(H)$.
  Then $F \perp G \otimes H$ if and only if $F \perp G \otimes \emptyset_{V(H)}$
  and $F \execution G \perp H$.
\end{proposition}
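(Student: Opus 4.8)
The plan is to reduce both sides of the equivalence to a count of alternating cycles and to read all the relevant pluggings as coloured subgraphs of a single common arena, the ternary plugging $F \plugging G \plugging H$. Recall that, since $V(G) \cap V(H) = \emptyset$ and $V(F) = V(G) \sqcup V(H)$, the graph $G \otimes H$ lives on $V(F)$ with edge set $E(G) + E(H)$ (no alternating path longer than an edge can form in $G \plugging H$, for lack of a shared vertex), while $G \otimes \emptyset_{V(H)}$ lives on $V(F)$ with edge set $E(G)$. Finally $V(F \execution G) = V(F) \symmdiff V(G) = V(H)$, so $F \execution G$ and $H$ do share a vertex set and $F \execution G \perp H$ is well-posed. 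Thus $F \perp G \otimes H$ says that $F \plugging (G \otimes H)$ — whose edges are $E(F)$ on one side and $E(G) \sqcup E(H)$ on the other — has no alternating cycle, and I will analyse such a cycle by the three colours $F$, $G$, $H$ of its edges.

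First I would split the alternating cycles of $F \plugging (G \otimes H)$ according to whether or not they traverse an $H$-edge. A cycle that uses no $H$-edge is literally an alternating cycle of $F \plugging (G \otimes \emptyset_{V(H)})$, and conversely; so the absence of $H$-free alternating cycles is exactly the statement $F \perp G \otimes \emptyset_{V(H)}$, which disposes of one conjunct.

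The heart of the argument is a bijection between the alternating cycles of $F \plugging (G \otimes H)$ using \emph{at least one} $H$-edge and the alternating cycles of $(F \execution G) \plugging H$. In one direction, given such a cycle I cut it at its $H$-edges: since two non-$F$ edges can never be consecutive in an alternating walk, every $H$-edge is flanked by $F$-edges, and each maximal stretch between two consecutive $H$-edges is an alternating $F/G$ path of the form $F, G, F, \ldots, F$ whose two endpoints (the target, resp. source, of the flanking $H$-edges) lie in $V(H)$. The alternation condition forces every interior vertex of such a stretch to be incident to one $F$- and one $G$-edge, hence to lie in $V(G)$; consequently each stretch is precisely one edge of $F \execution G$. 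Contracting all stretches turns the cycle into a closed walk alternating between $(F \execution G)$-edges and $H$-edges, i.e. an alternating cycle of $(F \execution G) \plugging H$. In the other direction I expand each $(F \execution G)$-edge back into its underlying alternating $F/G$ path and reconnect through the $H$-edges; a short check shows alternation is preserved at every junction, so the result is an alternating cycle of $F \plugging (G \otimes H)$ — and it uses at least one $H$-edge, since an alternating cycle of $(F \execution G) \plugging H$ necessarily traverses both colours. These two maps are then visibly mutually inverse.

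With the bijection in hand the proposition is immediate: $F \plugging (G \otimes H)$ has no alternating cycle iff it has neither an $H$-free one nor one using an $H$-edge, i.e. iff both $F \perp G \otimes \emptyset_{V(H)}$ and $F \execution G \perp H$ hold. I expect the only delicate point to be the well-definedness of the contraction/expansion: one must check that the maximal $F/G$-stretches are genuinely edges of $F \execution G$ (that their endpoints sit in $V(H)$ and their interiors are forced into $V(G)$ — which is exactly what the alternation condition buys us), and that regluing these paths with $H$-edges neither breaks alternation at the junctions nor silently splits or merges cycles. Everything else is bookkeeping about the three edge-colours.
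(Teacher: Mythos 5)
Your proof is correct and follows essentially the same route as the paper: the paper's own argument (given for the coherent-graph analogue of this proposition, the plain version being stated without proof) is exactly your case split --- alternating cycles in $F \plugging (G \otimes H)$ either avoid $E(H)$, which accounts for $F \perp G \otimes \emptyset_{V(H)}$, or use at least one $H$-edge, and the latter ``correspond exactly'' to alternating cycles between $F \execution G$ and $H$. Your contraction/expansion bijection is simply a detailed justification of that correspondence, which the paper leaves implicit.
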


Together, the above two facts lead to the monoidal closure of the category of
conducts and graphs. Most of the ingredients for a model of MLL are now present;
the construction will be detailed when we carry it out in the setting of
coherent graphs, which is the topic of the next section.

\begin{remark}
  A way to get rid of the side condition is to keep track of the cycles
  appearing during the execution and incorporating them in our notion of
  generalized program -- which would then consist of a graph together with a
  \enquote{set of cycles}, called a \emph{project}. The orthogonality of two
  projects $\project{a \perp b}$ can then be formulated as the fact that the set
  of cycles of the project $\project{a \execution b}$ is empty -- this is a
  \emph{focussed orthogonality} in the sense of Hyland and
  Schalk~\cite{hyland2003}, and our construction can then be seen as an instance
  of \emph{double glueing} carried out in a compact closed category.
\end{remark}

The project-based construction of a model of MLL is much more general, since
it allows one to define orthogonality as any reasonable predicate on the set
of cycles, not just emptiness. This leads to a family of models parameterized
by the choice of orthogonality. The interested reader can find details of this 
more general construction in earlier papers by the second author \cite{Seiller2012,
Seiller2016}. For reasons of simplicity, and because the selected results we will
detail in the last section do not require this more general setting, we will not use 
projects in this paper and restrict to the specific case described above. It should 
however be noted that the extension from graphs to coherent graphs explained 
in the next section could very well be performed in the more general setting of 
projects.

\section{Enriching interaction graphs with coherence}

We are now in a position to define the main object of study of this paper, by
endowing the edges of interaction graphs with a coherence space structure. The
definition of execution and orthogonality will be extended to these objects, and
this will suffice to build a model of MLL from them.

\subsection{Motivation: sparse non-deterministic proofs}

Let's say we want to interpret non-deterministic superpositions of proofs. More
formally, this corresponds to proofs in a sequent calculus enriched with the
\textsc{Sum} rule

\[ \frac{\vdash \Gamma \quad \ldots \quad \vdash \Gamma}{\vdash \Gamma} \]

An obvious solution would be to represent the set of possibilities as a formal
sum of proofs -- as is done for instance in differential linear
logic~\cite{ehrhard2006,ehrhard2017}. Indeed, in~\cite{maurel2003}, which
introduces this rule, proof equivalence is extended with associativity and
distributivity over arbitrary contexts of
\textsc{Sum}~\cite[Fig.~6]{maurel2003}. These rules can turn any proof using
\textsc{Sum} into an equivalent proof with a single \textsc{Sum} rule at the
root.

However, this may lead to an exponential blow-up of the size of the proof.
(Maurel's motivation for studying the \textsc{Sum} rule in~\cite{maurel2003} is
to characterize the complexity class \textsf{NP}; for this purpose, such a size
increase must be avoided.) We would like to have concise interpretations of
proofs in which small sub-proofs involve non-deterministic choice, without
having to duplicate the context.

The same issues arise in the theory of proof nets for Multiplicative-Additive
Linear Logic: like the \textsc{Sum} rule, the $\&$-introduction rule involves a
superposition of two proofs. Hughes and van Glabbeek's slice
nets~\cite{hughes2005} use the \enquote{formal sum} approach, and therefore, a
sequent calculus proof may have an exponentially bigger translation. In other
systems of MALL proof nets, such as monomial nets~\cite{girard1996} and conflict
nets~\cite{hughes2016}, this is avoided by morally sharing parts of the net
between different terms of the formal sum, by means of some kind of
\emph{coherence relation}. (For the former, the appendix~\cite[A.1]{girard1996}
provides an alternative presentation, equivalent to monomial weights, using a
coherence space; in the latter, the conflict relation corresponds to what we call
incoherence.)

Taking inspiration from this, we will extend interaction graphs to equip their
sets of edges with a coherence space structure. This will allow us to interpret
the \textsc{Sum} rule as a \enquote{incoherent union} of graphs in an
interaction graph model. However, it will not be sufficient to get a model of
MALL, because of obstructions\footnote{The natural encoding of additives would
  be:
\begin{itemize}[nolistsep,noitemsep]
\item for $A \oplus B$, $\mathsf{inl}(G) = G \otimes \emptyset_{B}$ and
  $\mathsf{inr}(H) = \emptyset_{A} \otimes H$,
\item for $A \with B$, $G \with H = (G \otimes \emptyset_{B}) + (\emptyset_{A}
  \otimes H)$: a proof of $A \with B$ consists of a superposition of a proof of
  $A$ and a proof of $B$.
\end{itemize}
But consider two linear maps $f : A \multimap C$ and $g : B \multimap C$, their
copairing $\langle f,g \rangle : A \oplus B \multimap C$, and some $a : A$.
Then, in general, the interpretation of $f(a)$ and that of $\langle f,g
\rangle(\mathsf{inl}(a))$ differ: the latter contains any edge of $g$ whose
endpoints are both in $C$.

This also occurs with the representation of additives
by means of \emph{sliced} interaction graphs; see the discussion
in~\cite[\S5.2]{Seiller2016} for more details and a concrete example, which can
be translated to coherent graphs.} which are common to GoI models.

Nonetheless, the consideration of this coherence space structure provides
an interesting construction which could be used to built GoI models of 
extensions of MLL+\textsc{Sum}, such as differential linear logic. Moreover,
the introduction of coherence allows for restrictions on the execution of 
two graphs that ensures finiteness of the result. Indeed in the general 
construction on graphs considered above, the execution of two finite graphs 
$F, G$ may not be finite. However, the execution of two finite coherent 
graphs will be a finite coherent graph.

\subsection{Coherent graphs: basic properties}

\begin{definition}
  A \emph{coherence relation} is a symmetric reflexive binary relation. A
  \emph{coherence space} $X$ is a set $|X|$ (the \emph{web} of $X$) equipped with
  a coherence relation written $\coh_X$ (or $\coh$ when $X$ is implied by the
  context). We use the following notations: $x \frown y :\Leftrightarrow x \coh y \wedge 
  x \neq y$, and $x \smile y :\Leftrightarrow x \not\coh y \wedge x \neq y$.

  \noindent The following binary operations are defined on coherence spaces $A, B$:
  \begin{itemize}[nolistsep, noitemsep]
  \item $\with$: $|A \with B| = |A| \sqcup |B|$, $a \frown_{A \with B} b$ for
    all $a \in A,\, b \in B$, and the rest is induced from $\coh_A$, $\coh_B$
  \item $\oplus$: $|A \oplus B| = |A| \sqcup |B|$, $a \smile_{A \oplus B} b$ for
    all $a \in A,\, b \in B$ and the rest is induced from $\coh_A$, $\coh_B$
  \end{itemize}
  \noindent Lastly, a \emph{clique} is a subset of the web whose elements are pairwise coherent.
\end{definition}

\begin{definition}
  A \emph{coherent graph} is a graph $G$ together with a coherence
  relation $\coh_{G}$ on its set of edges $E(G)$. We denote by $\cohspace{G}$
  the coherence space $(E(G),\coh_{G})$ associated with $G$.
\end{definition}

\begin{definition}
  Let $G$ and $H$ be coherent graphs. The plugging $G \plugging H$ is endowed with
  the coherence relation on edges such that $\cohspace{G\plugging
    H} \simeq \cohspace{G}\with\cohspace{H}$.

  An alternating path (or cycle) in $G \plugging H$ is \emph{coherent} if all its
  edges are pairwise coherent, i.e. if it is a clique for the associated
  coherence relation. Two coherent paths (or cycles) $\pi$, $\pi'$ are
  \emph{mutually coherent} if and only all edges in $\pi$ are coherent with all
  edges in $\pi'$.

  The \emph{execution} $G \execution H$ has as vertex set $V(G) \symmdiff V(H)$
  and contains, for each \emph{coherent alternating path} from $u$ to $v$ in $G
  \plugging H$ with $u, v \in V(G) \symmdiff V(H)$, an edge from $u$ to $v$. It
  comes with the coherent graph structure induced by mutual coherence of paths
  in $G \plugging H$.

  When $V(G) \cap V(H) = \emptyset$, we write $G \otimes H$ instead of $G
  \execution H$. This is justified by the shape of $G \otimes H$: $V(G
  \otimes H) = V(G) \sqcup V(H)$ and $\cohspace{G \otimes H} \simeq
  \cohspace{G} \with \cohspace{H}$.
\end{definition}

\begin{figure}
  \centering
  \begin{subfigure}{7.5cm}
    \centering
          \begin{tikzpicture}
        \node[vertex] (a) at (0,0) {};
        \node[vertex] (b) at (1.5,0) {};
        \node[vertex] (c) at (3,0) {};
        \node[vertex] (d) at (4.5,0) {};
\draw[ultra thick,red, ->] (a) to[bend right] (c);
\draw[ultra thick,blue, ->] (d) to[bend left] (b);

        \node[vertex] (a') at (0,1.2) {};
        \node[vertex] (b') at (1.5,1.2) {};
\draw[ultra thick, ->] (b') to[bend right] (a');
        \node[vertex] (v) at (-1.5,1.2) {};
        \draw[thick, ->] (v) to[bend left] (a');
        \draw[thick, ->] (b') to[bend right] (v);

\draw[densely dotted, ultra thick] (a) -- (a');
\draw[densely dotted, ultra thick] (b) -- (b');
      \end{tikzpicture}
    \end{subfigure}~~{\LARGE $\leadsto$}~~~\begin{subfigure}{7cm}
      \centering
      \begin{tikzpicture}
        \node[vertex] (v) at (-1.5,-2) {};
        \node[vertex] (a) at (3,-2) {};
        \node[vertex] (b) at (4.5,-2) {};

          \draw[thick,red, ->] (v) to[bend left] (a);
          \draw[thick,blue, ->] (b) to[bend right] (v);

      \end{tikzpicture}
    \end{subfigure}
    \caption{Execution of two coherent graphs}
    \label{fig:coh-ig}
\end{figure}
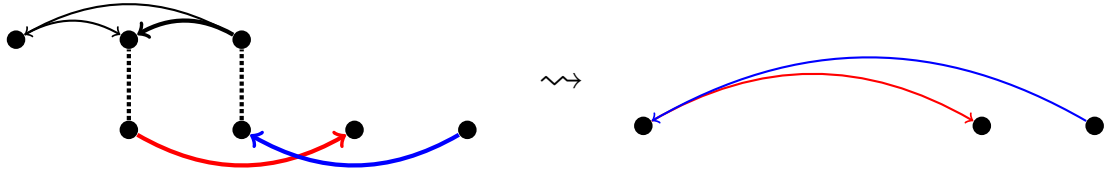

\begin{example}
  Consider the two coherent graphs on the left of Figure~\ref{fig:coh-ig}. The
  color scheme is meant to show the cliques in the coherence relation: while all
  the edges of the top graph are coherent, the blue edge and the red edge in the
  bottom graph are incoherent. Plugging them together gives a graph with:
  ${\color{red}\mathrm{red}} \coh \mathrm{black}$, ${\color{blue}\mathrm{blue}}
  \coh \mathrm{black}$, ${\color{red}\mathrm{red}} \smile
  {\color{blue}\mathrm{blue}}$.

  The execution is then computed by following the coherent alternating paths in
  this plugging, giving the two edges of the graph on the right. Note that the
  bolded path on the left is incoherent -- it contains both a red and a blue
  edge -- and therefore does not yield an edge in the execution.

  Since the two coherent paths in the plugging are mutually incoherent, the
  execution consists of two incoherent edges, which have been drawn with
  different colors to illustrate this.
\end{example}

This definition of execution is still adequate to compose morphisms:
\begin{proposition}[Associativity for coherent graphs]
  Let $F$, $G$ and $H$ be three coherent graphs. Suppose that $V(F) \cap V(G)
  \cap V(H) = \emptyset$. Then $(F::G)::H \equiv F::(G::H)$.
\end{proposition}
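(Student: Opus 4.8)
The plan is to reduce the coherent associativity statement to the already-established associativity for plain (non-coherent) graphs, and then separately verify that the coherence structure carried along the isomorphism matches on both sides. The key observation is that a coherent graph is a graph together with extra data (a coherence relation on edges), and the execution operation acts on the underlying graph exactly as in Definition~\ref{def:exec}; the coherence-filtering (keeping only coherent alternating paths) and the induced coherence structure are a refinement layered on top. So I would first recall the plain associativity isomorphism $\varphi : E((F::G)::H) \to E(F::(G::H))$ from the earlier proposition, which acts as the identity on the common vertex set $V(F) \symmdiff V(G) \symmdiff V(H)$ and sends an alternating path of $(F::G)::H$ to the corresponding alternating path of $F::(G::H)$.

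First I would make precise the combinatorial content of that isomorphism: an edge of $(F::G)::H$ is a coherent alternating path in $(F::G) \plugging H$, each of whose $(F::G)$-edges is itself a coherent alternating path in $F \plugging G$. Splicing these inner paths into the outer one yields an alternating path in the ternary plugging $F \plugging G \plugging H$ (using the ternary plugging from the Remark), and symmetrically for the right-hand side. The plain associativity proposition already tells us this splicing is a bijection on the level of alternating paths once we forget coherence; what remains is to check that it restricts to a bijection on \emph{coherent} alternating paths and that it is an isomorphism of the induced coherence spaces.

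The main technical step, and the place I expect the real work to lie, is the following coherence-preservation claim: a ternary alternating path is coherent (a clique for $\coh_{F \plugging G \plugging H} \simeq \cohspace{F} \with \cohspace{G} \with \cohspace{H}$) if and only if it decomposes, on the left, into an outer coherent path whose inner $(F::G)$-edges are themselves coherent paths, mutually coherent with one another and with the $H$-edges; and symmetrically on the right. Because $\with$ is associative on coherence spaces and coherence of a set of edges is just pairwise coherence, both the two-step left grouping and the two-step right grouping unfold to the \emph{same} condition, namely that all edges of the ternary path are pairwise coherent. The induced coherence on the execution is likewise defined by mutual coherence of the constituent paths, which again unfolds to pairwise coherence of all edges involved, so the two groupings induce the same coherence relation on the common edge set. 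Thus $\varphi$ is simultaneously the plain graph isomorphism and a bijection between cliques, and it transports one coherence relation to the other.

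Concretely, I would carry out the steps in this order: (i) state the ternary splicing correspondence and invoke the earlier (non-coherent) associativity to get the underlying graph isomorphism $\varphi$ fixing vertices; (ii) prove the coherence-preservation claim by reducing both nested $\with$-structures to pairwise coherence of ternary-path edges, using associativity of $\with$ and the definition of clique; (iii) conclude that $\varphi$ restricts to a bijection on coherent alternating paths, hence is a bijection $E((F::G)::H) \to E(F::(G::H))$ compatible with sources and targets; and (iv) check that the coherence relations induced on both executions agree under $\varphi$, again by collapsing mutual-coherence-of-paths to pairwise-coherence-of-edges. The only genuine subtlety is bookkeeping in step~(ii)--(iv): one must be careful that an edge appearing in two different inner paths, or a path visiting an edge more than once, does not break the clique characterization — but since coherence is reflexive and symmetric, pairwise coherence is insensitive to such repetitions, so the argument goes through.
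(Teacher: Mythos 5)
Your proposal is correct and follows essentially the same route as the paper's own (very terse) proof: both sides are identified with the graph of coherent alternating paths in the ternary plugging $F \plugging G \plugging H$, with coherence given by mutual coherence of paths. Your steps (ii)--(iv), collapsing the nested $\with$-structure and mutual-coherence conditions to pairwise coherence of edges in the ternary path, are exactly the details the paper leaves implicit.
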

\begin{proof}
  Both sides of the $\equiv$ sign are ways of computing the graph whose vertices
  are $V(F) \symmdiff V(G) \symmdiff V(H)$, and whose edges correspond to the
  coherent alternating paths in $F \plugging G \plugging H$, with the coherence
  relation given by mutual coherence of paths.
\end{proof}

And the operation mentioned earlier to represent non-determinism can now be
formalized as follows:
\begin{definition}
  The \emph{incoherent sum} $G \incohplus H$ of two coherent graphs $G$ and $H$
  with $V(G) = V(H) = V$ is defined as $V(G \incohplus H) = V$, $\cohspace{G
    \incohplus H} = \cohspace{G} \oplus \cohspace{H}$ (therefore $E(G \incohplus H) = E(G)
  + E(H)$).
\end{definition}

For example, the bottom left graph of Figure~\ref{fig:coh-ig} can be seen as the
incoherent sum of the red subgraph and the blue subgraph -- both subgraphs
having 4 vertices. Likewise, the graph on the right can be decomposed into an
incoherent sum.

\subsection{An application of coherence: simple paths}\label{sec:Simple}

Although we haven't yet redefined orthogonality and conducts for coherent
graphs, what we have seen until now suffices to showcase an application of the
coherence space structure on edges, initially considered in Seiller's PhD
thesis~\cite{seillerthese}.

The motivation for the extension of graphs with a notion of coherence was to
limit the execution, which often led to infinite graphs. Indeed, let $F$ and $G$
be two graphs, such that $F \plugging G$ contains an alternating
path of the form $\pi C \pi'$, where $C$ is an alternating cycle and $\pi, \pi'$
are two alternating paths, with endpoints in $V(F) \symmdiff V(G)$. Then the
execution $F \execution G$ contains infinitely many edges, even when $F$ and $G$
are finite, since all the $\pi C^n \pi'$, $n \in \mathbb{N}$, are alternating
paths.

In the model defined in section~\ref{sec:ig}, this doesn't happen when composing
morphisms because the \enquote{side condition} ensures that $F \plugging G$ does
not contain alternating cycles. However, when looking at more general models
\cite{Seiller2012,Seiller2016}, we need another way to keep the execution
finite. It is then natural to consider restricting our attention to simple paths
and cycles.

\begin{definition}
  A path (resp.\ cycle) is \emph{simple} if it does not visit any vertex more
  than once.
\end{definition}

But restricting the execution (resp.\ orthogonality) to simple paths (resp.\
cycles) breaks the adjunction.

\begin{example}
  Figure~\ref{simplecontreex} shows how the adjunction fails when one considers
  only simple paths and cycles. Let us write $G_1 \execution_s G_2$ for the
  \emph{simple execution} between $G_1$ and $G_2$ -- i.e. the execution
  restricted to the edges coming from simple paths -- and $G_1 \perp_s G_2$ when
  there is no alternating simple cycle in $G_1 \plugging G_2$.

  On the left, we have no alternating simple cycle in $F \plugging G \plugging H$,
  so $F \perp_s G \otimes H$. On the other hand, one can see on the right that
  there is a simple alternating cycle $C$ between the simple execution $F
  \execution_s G$ and $H$, so $F \execution_s G \not\perp_s H$. In fact, this
  cycle lifts to a non-simple cycle in $F \plugging G \plugging H$: the edges
  $deb$ and $ce^{-1}a$ correspond to paths in $F \plugging G$ which both visit the
  vertices 1 and 2.
  
  This example therefore explains why the introduction of coherence is necessary
  to restrict to simple paths: coherence keeps track of possible dependencies
  between paths (i.e. two paths that used the same edge) that may be forgetten
  and lead to troubles in the model (i.e. break associativity, which is the model's
  counterpart of the Church-Rosser property).
\end{example}

\begin{figure}
\centering
\begin{subfigure}{9cm}
\begin{tikzpicture}[scale=0.7]
	\node (1) at (0,0) {\scriptsize{$1$}};
	\node (2) at (2,0) {\scriptsize{$2$}};
	\node (3) at (4,0) {\scriptsize{$3$}};
	\node (4) at (6,0) {\scriptsize{$4$}};
	\node (5) at (8,0) {\scriptsize{$5$}};
	\node (6) at (10,0) {\scriptsize{$6$}};
	
	\definecolor{darkgreen}{rgb}{0.15,0.6,0.15};
	
	\draw[->,red] (1) .. controls (4,3) and (6,3) .. (5) node [midway,above] {$a$};
	\draw[->,red] (2) .. controls (2.8,1) and (3.2,1) .. (3) node [midway,below] {$b$};
	\draw[->,red] (4) .. controls (5,2) and (3,2) .. (2) node [midway,below] {$c$};
	\draw[->,red] (6) .. controls (6,4) and (4,4) .. (1) node [midway,above] {$d$};
	
	\draw[->,blue] (1) .. controls (0.5,-0.7) and (1.5,-0.7) .. (2) node [midway,below] {$e$};
	\draw[->,blue] (2) .. controls (1.5,-1.3) and (0.5,-1.3) .. (1) node [midway,below] {$e^{-1}$};
	
	\draw[->,darkgreen] (3) .. controls (4.5,-1) and (5.5,-1) .. (4) node [midway,above] {$f$};
	\draw[->,darkgreen] (5) .. controls (8.5,-1) and (9.5,-1) .. (6) node [midway,above] {$g$};
	
	\draw[-,dashed,red] (0,0.2) -- (10,0.2) {};
	\draw[-,dashed,red] (10,4) -- (10,0.2) {};
	\draw[-,dashed,red] (0,0.2) -- (0,4) {};
	\draw[-,dashed,red] (0,4) -- (10,4) {};
	
	\draw[-,dashed,blue] (0,-0.2) -- (2,-0.2) {};
	\draw[-,dashed,blue] (2,-2) -- (2,-0.2) {};
	\draw[-,dashed,blue] (0,-0.2) -- (0,-2) {};
	\draw[-,dashed,blue] (0,-2) -- (2,-2) {};

	\draw[-,dashed,darkgreen] (4,-0.2) -- (10,-0.2) {};
	\draw[-,dashed,darkgreen] (10,-2) -- (10,-0.2) {};
	\draw[-,dashed,darkgreen] (4,-0.2) -- (4,-2) {};
	\draw[-,dashed,darkgreen] (4,-2) -- (10,-2) {};
	
	\node (G) at (0.3,-1.7) [blue] {G};
	\node (H) at (9.7,-1.7) [darkgreen] {H};
	\node (F) at (0.3,3.7) [red] {F};
\end{tikzpicture}
\caption{$F\plugging G \plugging H \cong F \plugging (G \otimes H)$}
\end{subfigure}~~~~~\begin{subfigure}{6cm}
\begin{tikzpicture}[scale=0.7]
	\node (3) at (0,0) {\scriptsize{$3$}};
	\node (4) at (2,0) {\scriptsize{$4$}};
	\node (5) at (4,0) {\scriptsize{$5$}};
	\node (6) at (6,0) {\scriptsize{$6$}};
	
	\definecolor{darkgreen}{rgb}{0.15,0.6,0.15};
	
	\draw[->,violet] (4) .. controls (2.8,1) and (3.2,1) .. (5) node [midway,below] {$ce^{-1}a$};
	\draw[->,violet] (6) .. controls (5,2) and (1,2) .. (3) node [midway,below] {$deb$};

	\draw[->,darkgreen] (3) .. controls (0.5,-1) and (1.5,-1) .. (4) node [midway,above] {$f$};
	\draw[->,darkgreen] (5) .. controls (4.5,-1) and (5.5,-1) .. (6) node [midway,above] {$g$};
	
	\draw[-,dashed,violet] (0,0.2) -- (6,0.2) {};
	\draw[-,dashed,violet] (6,2) -- (6,0.2) {};
	\draw[-,dashed,violet] (0,0.2) -- (0,2) {};
	\draw[-,dashed,violet] (0,2) -- (6,2) {};
	
	\draw[-,dashed,darkgreen] (0,-0.2) -- (6,-0.2) {};
	\draw[-,dashed,darkgreen] (6,-2) -- (6,-0.2) {};
	\draw[-,dashed,darkgreen] (0,-0.2) -- (0,-2) {};
	\draw[-,dashed,darkgreen] (0,-2) -- (6,-2) {};	

	\node (H) at (0.3,-1.7) [darkgreen] {H};
	\node (FG) at (0.6,1.7) [violet] {F$\execution_s $G};
	
\end{tikzpicture}
\caption{$(F\execution_s G)\plugging H$\label{simplecontreexex}}
\end{subfigure}

\caption{Counter-example of the adjunction for simple paths and cycles\label{simplecontreex}}
\end{figure}
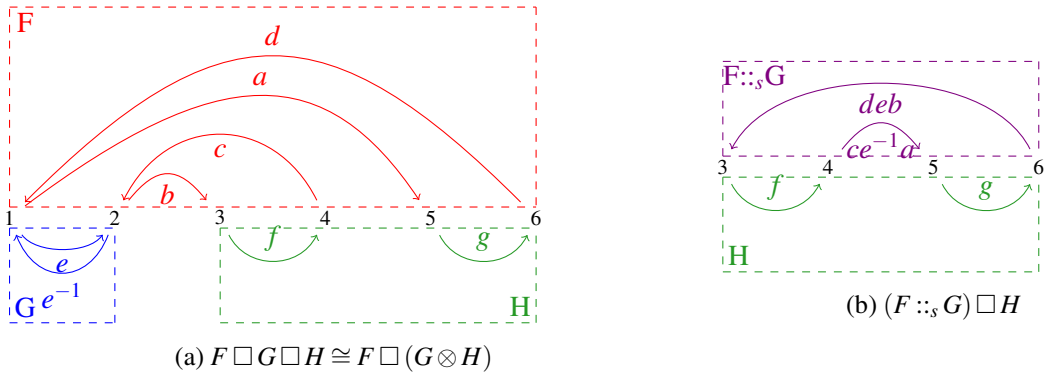

Coherent graphs were therefore a way of making the adjunction work with simple paths.

\begin{remark}
  To ensure the finiteness of execution, another natural choice would be to
  restrict to paths and cycles without repeating \emph{edges}, instead of
  vertices. This is done in a similar way to what we are about to show here,
  but using a non-reflexive coherence relation~\cite[\S 5.2]{seillerthese}.
\end{remark}

\begin{definition}
  Let $G$ be a graph. The \emph{simple coherence} relation
  on $E(G)$ is defined as follows: two edges are coherent if
  and only if they do \emph{not} have a common vertex.
\end{definition}

\begin{definition}
  A coherent graph $G$ is \emph{simple} if $\coh_{G}$
  is included in the simple coherence on $E(G)$.
\end{definition}

For instance, any coherence on a matching is simple, and the \emph{chordless
  coherence} defined in section~\ref{sec:principal} is included in the simple
coherence.

\begin{proposition}
  Let $G$ and $H$ be two simple coherent graphs with $V(G)=V(H)$. 
  A coherent alternating path or cycle in $G \plugging H$ is simple.
  The converse holds if $\coh_G$ and $\coh_H$ are equal to the simple 
  coherences.
\end{proposition}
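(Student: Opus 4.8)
The plan is to analyse the edges of a coherent path (or cycle) $\pi$ in $G \plugging H$ by sorting them according to their provenance. Recall that $\cohspace{G \plugging H} \simeq \cohspace{G} \with \cohspace{H}$, so that in $G \plugging H$ every edge of $G$ is coherent with every edge of $H$, whereas two \emph{distinct} edges both coming from $G$ (resp.\ both from $H$) are coherent exactly when they are related by $\coh_G$ (resp.\ $\coh_H$). Since $G$ and $H$ are simple coherent graphs, $\coh_G$ and $\coh_H$ are contained in the respective simple coherences; hence two distinct coherent edges of the same provenance are never incident. This single observation is the engine for both implications.

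For the forward implication, let $\pi$ be a coherent alternating path or cycle. The first step is to note that its edges lying in $E(G)$ are pairwise coherent, hence pairwise non-incident, i.e.\ they form a matching $M_G$; likewise the edges in $E(H)$ form a matching $M_H$. The underlying undirected subgraph $M_G \cup M_H$ therefore has maximum degree $2$, since each vertex meets at most one edge of $M_G$ and at most one of $M_H$. Consequently $M_G \cup M_H$ is a disjoint union of simple paths and simple cycles, and along each component the edges alternate between $M_G$ and $M_H$ (a degree-$2$ vertex carries exactly one edge of each). The second step is to observe that, by the alternation condition on $\pi$, at every internal vertex the walk enters along an edge of one matching and leaves along an edge of the other; since at most one edge of each matching is available there, the continuation of the walk is forced. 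Thus $\pi$ is a deterministic traversal of a single component of $M_G \cup M_H$, and it can only return to a vertex by closing up a cyclic component. Hence $\pi$ is a simple path or a simple cycle. Note that only the containment of $\coh_G$ and $\coh_H$ in the simple coherences is used here.

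For the converse, assume now that $\coh_G$ and $\coh_H$ are \emph{equal} to the simple coherences, and let $\pi$ be a simple alternating path or cycle; I must show its edges form a clique. Take two distinct edges $e, f$ of $\pi$. If they have different provenance then they are coherent in $\cohspace{G} \with \cohspace{H}$ by definition of $\with$. If both come from $G$ (the case of $H$ being symmetric), then because $\pi$ is simple each vertex occurs only once along $\pi$, and the alternation condition forces each vertex to be incident to at most one edge of $\pi$ taken from $E(G)$; hence $e$ and $f$ cannot share a vertex. Since $\coh_G$ now \emph{is} the simple coherence, non-incidence yields $e \coh_G f$. Therefore every pair of edges of $\pi$ is coherent, so $\pi$ is coherent.

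The main obstacle is the second step of the forward direction: ruling out that the forced alternating traversal revisits a vertex before closing up a component. This is precisely the determinism argument, and it is where the matching structure extracted from coherence is essential — without it a max-degree-$2$ walk could still backtrack. The role of the hypotheses is also delicate and worth flagging: the forward direction needs only inclusion in the simple coherence, whereas the converse genuinely requires equality, since a strict sub-coherence could declare two non-incident same-provenance edges incoherent and thereby spoil the coherence of an otherwise simple path. Degenerate features are harmless and need no separate treatment: a loop forms its own simple cyclic component, while parallel same-provenance edges are incident and hence never jointly coherent.
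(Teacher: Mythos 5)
Your proof is correct, and your converse direction is essentially the paper's: in a simple alternating path or cycle, each vertex meets at most one edge of each provenance, so no two distinct same-provenance edges are incident, and equality with the simple coherence then yields coherence. Your forward direction, however, takes a genuinely different route. The paper argues locally and by contraposition: assuming the walk is not simple, it picks the repeated vertex with the earliest first visit, uses alternation to extract two \emph{distinct} incident edges of the same provenance at that vertex, and concludes incoherence from simplicity of $\coh_G$ (or $\coh_H$). You argue directly and globally: coherence plus simplicity forces the $G$-edges and the $H$-edges of the walk to form two matchings $M_G$, $M_H$, whose union has maximum degree $2$ and so decomposes into simple paths and cycles along which edges alternate, and the alternating walk is then a forced (deterministic) traversal of one component, hence simple. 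Your version buys a structural picture --- coherent alternating walks are exactly forced traversals of components of a union of two matchings, echoing the paper's earlier discussion of plugging matchings --- and it makes explicit that the forward direction only needs \emph{inclusion} in the simple coherences (the paper's proof also uses only this, but silently); the paper's version is shorter and needs no component decomposition. One caveat applies to both proofs equally: since the paper's notion of path does not exclude repeated edges, a walk may wrap around a closed coherent component more than once, and such a walk is alternating and coherent yet not simple. Your phrase \enquote{it can only return to a vertex by closing up a cyclic component} glosses over this, exactly as the paper's choice of \enquote{the vertex with the earliest first visit} does (that step presupposes an incoming edge at the first visit, which fails when that vertex is the walk's starting point). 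This is a degeneracy in the informal setting of the statement rather than a defect specific to your argument, so it does not count against you.
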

\begin{proof}
  We prove the proposition for paths; a similar argument works for cycles.

  Let $\pi$ be an alternating path in $G \plugging H$, and suppose it is not
  simple. Among all the vertices visited at least twice, let $v$ be the one with
  the earliest first visit; this first visit involves an edge $e$ with $t(e) =
  v$. Suppose w.l.o.g.\ $e \in E(G)$. The second visit to $e$ involves an edge
  $e' \in E(G)$ (either incoming or outgoing) because of alternation; $e \neq
  e'$, or else $s(e)$ would contradict the minimality of $v$. And since $t(e) =
  v \in \{s(e'), t(e')\}$, $e \smile_G e'$ by definition of simplicity of
  $\coh_G$. Thus, $\pi$ is incoherent.

  Conversely, in an alternating simple path, the source and destination are
  incident to a single edge, while every intermediate vertex is incident to
  exactly one edge of $G$ and one edge of $H$. So the path cannot contain two
  incident edges from the same graph.
\end{proof}

Thus, as long as we execute two finite simple coherent graphs together, we
always get finite graphs. It only remains to check that simplicity is preserved
by execution to show that we have devised an universe closed under execution of
generalized programs which are all finite graphs.

\begin{proposition}
  If $G$ and $H$ are simple, then $G \execution H$ is simple.
\end{proposition}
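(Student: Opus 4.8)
The plan is to unfold what \enquote{$G \execution H$ is simple} means and reduce it to a statement about the coherent alternating paths that constitute its edges. By definition, an edge of $G \execution H$ is a coherent alternating path in $G \plugging H$ whose two endpoints lie in $V(G) \symmdiff V(H)$, and two such edges $\pi, \pi'$ are coherent precisely when they are mutually coherent. So simplicity of $G \execution H$ amounts to: any two \emph{distinct} mutually coherent paths $\pi \neq \pi'$ have no common endpoint. I would prove the contrapositive — if $\pi$ and $\pi'$ share an endpoint $w \in V(G) \symmdiff V(H)$, then they cannot be mutually coherent (or must coincide) — exploiting simplicity of $G$ and $H$ exactly in the spirit of the preceding proposition.

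The key structural observation is that, since $\pi$ and $\pi'$ are each coherent and mutually coherent, the set $E(\pi) \cup E(\pi')$ is a clique of $\cohspace{G \plugging H} \simeq \cohspace{G} \with \cohspace{H}$. A clique of a $\with$ splits as a clique of $\cohspace{G}$ together with a clique of $\cohspace{H}$, and because $G$ and $H$ are simple, a clique of $\cohspace{G}$ (resp.\ $\cohspace{H}$) contains at most one edge incident to any given vertex. Hence in the subgraph of $G \plugging H$ spanned by $E(\pi) \cup E(\pi')$ every vertex has degree at most two (at most one $G$-edge and one $H$-edge), so its connected components are simple paths and cycles. The shared endpoint $w$ lies in exactly one of $V(G), V(H)$ — say $w \in V(G) \setminus V(H)$ — so it meets no edge of $H$ and at most one edge of $G$: it has degree at most one, and therefore sits at the end of a \emph{path} component $P$. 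Both $\pi$ and $\pi'$ are connected subgraphs of $P$ with $w$ as an endpoint, hence initial segments of $P$ issuing from $w$.

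It then remains to compare these two initial segments, and this is the step I expect to be the real obstacle, since the paths may overlap rather than diverge immediately at $w$. Their far endpoints $v, v'$ both lie in $V(G) \symmdiff V(H)$. If $v \neq v'$, then the closer one, say $v$, is an \emph{internal} vertex of the longer path; but an internal vertex of an alternating path is incident to one $G$-edge and one $H$-edge, so it lies in $V(G) \cap V(H)$, contradicting $v \in V(G) \symmdiff V(H)$. If $v = v'$, the two initial segments of the simple path $P$ coincide as edge sets, and since the unique $G$-edge at $w$ has a fixed orientation it forces $\pi$ and $\pi'$ to leave $w$ in the same direction; the directed alternating path on a fixed simple-path edge set with a fixed starting endpoint being unique, we conclude $\pi = \pi'$, against $\pi \neq \pi'$. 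Either way a shared endpoint is impossible, so distinct coherent edges of $G \execution H$ are non-incident, i.e.\ $G \execution H$ is simple. (A more computational variant avoids the degree argument by directly following the maximal common prefix of $\pi$ and $\pi'$: at its first divergence one finds two distinct edges of the same graph sharing a vertex, hence incoherent by simplicity, while a premature termination of one path would place its endpoint — supposedly in $V(G) \symmdiff V(H)$ — at an internal vertex of the other, again forcing it into $V(G) \cap V(H)$.)
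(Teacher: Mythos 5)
Your proof is correct, and your primary argument takes a genuinely different route from the paper's --- which is, almost word for word, your parenthetical \enquote{computational variant}. The paper argues locally: given two distinct edges of $G \execution H$ with a common source, it writes the corresponding paths as $\rho\sigma$ and $\rho\sigma'$ with $\rho$ their longest common prefix, notes that neither $\sigma$ nor $\sigma'$ can be empty (else the endpoint of one path, which must lie in $V(G) \symmdiff V(H)$, would be an internal vertex of the other, hence in $V(G) \cap V(H)$), and then the two edges at the divergence point come from the same graph and share a vertex, so they are incoherent by simplicity; the remaining incidence cases (common target, source--target) are dispatched by symmetry remarks. Your main argument is instead global and proceeds by contradiction from mutual coherence: $E(\pi) \cup E(\pi')$ is a clique of $\cohspace{G} \with \cohspace{H}$, hence spans a subgraph of maximum degree $2$ in which the shared endpoint $w$ has degree at most $1$, so both paths are initial segments of one simple path component and are then forced to coincide. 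What your route buys: all four incidence configurations are handled uniformly with no case split, and the simplicity of individual coherent paths (the paper's preceding proposition) falls out of the same degree argument rather than being invoked separately. What it costs: extra care distinguishing walks from their edge sets (your $v = v'$ case, where edge directions are needed to rule out opposite traversals), whereas the paper's local divergence argument produces the incoherent pair of edges, and hence the conclusion $e \smile e'$, directly. Both arguments ultimately rest on the same two facts: simplicity forbids coherent incident edges within $G$ or within $H$, and endpoints of execution edges lie in $V(G) \symmdiff V(H)$ while internal vertices of alternating paths lie in $V(G) \cap V(H)$.
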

\begin{proof}
  Let $e \neq e'$ have a common source vertex in $G \execution H$; we wish to show $e
  \smile e'$. By definition, $e$ and $e'$ correspond to paths in $G \plugging H$,
  respectively $\pi$ et $\pi'$. Let $\rho$ be their longest common prefix: $\pi
  = \rho \sigma$ and $\pi' = \rho \sigma'$. Since $\pi \neq \pi'$, at least one
  of $\sigma$ and $\sigma'$ must be non-empty. Furthermore, if only one of them
  were empty, the destination of $\pi$ would be a vertex in $V(G) \cap V(H)$
  which is impossible.
  
  Therefore, there exists $f \neq f'$ such that $\sigma = f \psi$, $\sigma' = f'
  \psi'$. Then $f \smile f'$ by simplicity of $\coh_G$ si $f, f' \in E(G)$, or
  of $\coh_H$ si $f, f' \in E(H)$ (necessarily, $f$ and $f'$ come from the same
  graph; this is where the common starting vertex assumption comes into play).
  Thus, $\pi$ and $\pi'$ are not mutually coherent.

  If $e$ and $e'$ have the same target vertex, the proof is the same in the
  opposite graph; if $s(e) = t(e')$ or $t(e) = s(e')$, it suffices to look at
  the pair of edges incident to the common vertex.
\end{proof}

\subsection{A model made out of coherent graphs}

Now, we carry out the orthogonality-based construction of a model of MLL, which
was previously sketched, in the setting of coherent graphs.

\begin{definition}
  For two coherent graphs $G, H$ with $V(G) = V(H)$, we write $G \perp H$ when $G
  \plugging H$ contains no \emph{coherent} alternating cycle.
\end{definition}

As before, the tensor and the execution satisfy an adjunction with respect to
$\perp$:

\begin{proposition}
  Let $F$, $G$ and $H$ be three coherent graphs. Suppose that $V(F) = V(G)
  \sqcup V(H)$. Then $F \perp G \otimes H$ if and only if $F \perp G \otimes
  \emptyset_{V(H)}$ and $F \execution G \perp H$.
\end{proposition}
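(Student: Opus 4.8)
The plan is to reduce both sides to statements about coherent alternating cycles in the ternary plugging $F \plugging G \plugging H$, and then to decompose such cycles according to how they use the edges of $H$. First I would record two reformulations. Since $V(G) \cap V(H) = \emptyset$, the graph $G \otimes H$ has edge set $E(G) + E(H)$ and coherence space $\cohspace{G} \with \cohspace{H}$, so by associativity of $\with$ the coherent graph $F \plugging (G \otimes H)$ is exactly the ternary plugging $F \plugging G \plugging H$, with ``alternating'' meaning that consecutive edges alternate between $E(F)$ and $E(G) \sqcup E(H)$. Hence $F \perp G \otimes H$ is equivalent to: $F \plugging G \plugging H$ has no coherent alternating cycle. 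Likewise, as $\emptyset_{V(H)}$ has no edges, $F \plugging (G \otimes \emptyset_{V(H)})$ coincides with $F \plugging G$ (the vertices $V(H)$ being isolated), so $F \perp G \otimes \emptyset_{V(H)}$ is equivalent to: $F \plugging G$ has no coherent alternating cycle.

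With these reformulations I would argue both implications by contraposition, splitting the coherent alternating cycles of $F \plugging G \plugging H$ according to whether they traverse an edge of $H$. A cycle using no $H$-edge is literally a coherent alternating cycle of $F \plugging G$ (alternation $F$ versus $G$, coherence $\cohspace{F} \with \cohspace{G}$), matching the first right-hand condition. For the remaining cycles, the key is a cut-and-glue correspondence with the coherent alternating cycles of $(F \execution G) \plugging H$; note this plugging is well-formed because $V(F \execution G) = V(F) \symmdiff V(G) = V(H)$.

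The cutting construction takes a coherent alternating cycle $C$ using at least one $H$-edge and lists its $H$-edges $h_1, \dots, h_k$ in cyclic order. By alternation each $h_i$ is flanked by $F$-edges, so every maximal segment $p_i$ of $C$ between two consecutive $H$-edges is an alternating path in $F \plugging G$ whose endpoints are endpoints of $H$-edges, hence lie in $V(H) = V(F \execution G)$. As a sub-path of $C$, each $p_i$ is coherent, so it is an edge of $F \execution G$, and alternating $p_1, h_1, \dots, p_k, h_k$ yields a closed alternating walk $\bar C$ in $(F \execution G) \plugging H$. The gluing construction is the reverse: the edges of a coherent alternating cycle $\bar C$ of $(F \execution G) \plugging H$ are $F \execution G$-edges $q_1, \dots, q_k$ alternating with $H$-edges $h_1, \dots, h_k$; unfolding each $q_i$ into the coherent alternating $F \plugging G$ path it names and reinserting it between the $h_i$ rebuilds an alternating cycle of $F \plugging G \plugging H$ that uses $H$-edges (the degenerate case $k=1$ giving a length-two cycle $\bar p_1, h_1$).

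I expect the coherence bookkeeping to be the main obstacle: one must check that ``all edges of $C$ are pairwise coherent'' corresponds exactly to ``each segment $p_i$ is internally coherent and the segments are pairwise mutually coherent''. The latter is precisely the coherence relation on $F \execution G$ (defined by mutual coherence of paths), which is why $q_i \coh q_j$ transports to mutual coherence of the underlying $p_i, p_j$; coherence among the $h_i$ is inherited, and every $F \execution G$-edge is automatically coherent with every $H$-edge since $\cohspace{(F \execution G) \plugging H} \simeq \cohspace{F \execution G} \with \cohspace{H}$. These checks run in both directions, together with the alternation bookkeeping at the junctions between segments and $H$-edges. Since we only need the existence of a cycle rather than a bijection, the two constructions suffice: $F \not\perp G \otimes H$ gives a coherent alternating cycle of $F \plugging G \plugging H$, which if $H$-free yields $F \not\perp G \otimes \emptyset_{V(H)}$ and otherwise cuts to $F \execution G \not\perp H$; conversely any cycle witnessing a right-hand failure lifts, directly or by gluing, to one in $F \plugging G \plugging H$, giving $F \not\perp G \otimes H$. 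I would not bother verifying that cut and glue are mutually inverse.
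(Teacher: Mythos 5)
Your proposal is correct and follows essentially the same route as the paper's proof: case-split the coherent alternating cycles of $F \plugging (G \otimes H)$ according to whether they use an edge of $H$, identify the $H$-free ones with cycles witnessing $F \not\perp G \otimes \emptyset_{V(H)}$, and put the remaining ones in correspondence (your cut-and-glue) with coherent alternating cycles between $F \execution G$ and $H$. The paper states this correspondence in three sentences without the bookkeeping; your write-up just fills in the details it leaves implicit.
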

\begin{proof}
  A coherent alternating cycle between $F$ and $G \otimes H$ either is between
  $F$ and $G$ or involves at least one edge of $H$. The former case corresponds
  to $F \perp G \otimes \emptyset_{V(H)}$. Cycles fitting in the latter case
  correspond exactly to alternating cycles between $F \execution G$ and $H$.
\end{proof}

The notion of conduct is the same as before, and actually the next few
definitions and theorems are rather formal and are not affected by adding a
coherence relation to graphs.

\begin{definition}
  A \emph{conduct} $\cond{A}$ on a vertex set $V(\cond{A})$ is a set of coherent 
  graphs with vertex set $V(\cond{A})$ such that $\cond{A}=\cond{A}^{\bot\bot}$.
\end{definition}

\begin{definition}
  Let $\cond{A}$ and $\cond{B}$ be two conducts such that $V(\cond{A}) \cap
  V(\cond{B}) = \emptyset$.
  We define $\cond{A \otimes B} = \{ G \otimes H \mid G \in
  \cond{A}, H \in \cond{B} \}^{\bot\bot}$.
  \end{definition}

It follows that  $\cond{A \parr B} = \{ G \otimes H \mid G \in \cond{A}, H \in
 \cond{B} \}^\bot$, and $\cond{A \multimap B} = \cond{A^\bot \parr B}$.
 As in the previous section, one easily obtains the following characterisation 
 of the linear maps.
 
 \begin{proposition}
 Let $\cond{A}$ and $\cond{B}$ be two conducts such that $V(\cond{A}) \cap
  V(\cond{B}) = \emptyset$. Then
  \[\cond{A\multimap B}= \{F \mid \forall G \in \cond{A},\, F 
\perp G \otimes \emptyset_{V(\cond{B})} \text{ and } F \execution G \in 
\cond{B}\}\]
 \end{proposition}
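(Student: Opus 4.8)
The plan is to reduce the claim to the adjunction proposition by unfolding the definition of $\cond{A \multimap B}$ into a single orthogonality statement and then reorganizing the quantifiers. First I would compute, using $\cond{A \multimap B} = \cond{A^\bot \parr B}$ together with the definition of $\parr$ and the idempotency $\cond{A^{\bot\bot}} = \cond{A}$, that $\cond{A \multimap B} = \cond{(A \otimes B^\bot)^\bot}$. Since the orthogonal of a biorthogonal closure collapses (the map $(-)^\bot$ satisfies $X^{\bot\bot\bot} = X^\bot$), this is exactly $\{ G \otimes H \mid G \in \cond{A},\, H \in \cond{B^\bot} \}^\bot$. Concretely: $F \in \cond{A \multimap B}$ if and only if $F \perp G \otimes H$ for every $G \in \cond{A}$ and every $H \in \cond{B^\bot}$. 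All vertex sets line up here, since $V(F) = V(\cond{A}) \sqcup V(\cond{B}) = V(G) \sqcup V(H)$, and $V(\cond{A}) \cap V(\cond{B}) = \emptyset$ guarantees that the tensors $G \otimes H$ and $G \otimes \emptyset_{V(\cond{B})}$ are well defined.

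Next, for each such pair $G, H$, I would invoke the adjunction proposition to rewrite the single condition $F \perp G \otimes H$ as the conjunction of $F \perp G \otimes \emptyset_{V(\cond{B})}$ and $F \execution G \perp H$. The point is then to factor the quantifier over $H$. The first conjunct does not mention $H$, while the second, quantified over all $H \in \cond{B^\bot}$, says precisely that $F \execution G \in \cond{B^{\bot\bot}} = \cond{B}$ (note that $F \execution G$ is a graph on $V(F) \symmdiff V(G) = V(\cond{B})$, so membership in $\cond{B}$ is meaningful). For the backward direction this factoring is immediate: given $G \in \cond{A}$ and $H \in \cond{B^\bot}$, the hypothesis supplies $F \perp G \otimes \emptyset_{V(\cond{B})}$ directly, and $F \execution G \in \cond{B} = \cond{B^{\bot\bot}}$ yields $F \execution G \perp H$, so the adjunction reassembles $F \perp G \otimes H$.

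For the forward direction I need to extract the $H$-independent side condition from a statement quantified over $H$, and here the one delicate point arises: the extraction requires a witness in $\cond{B^\bot}$. I would supply it by observing that $\emptyset_{V(\cond{B})} \in \cond{B^\bot}$ — indeed $\emptyset_V$ is orthogonal to every graph on $V$, since in $\emptyset_V \plugging K$ all edges come from $K$, so no coherent \emph{alternating} cycle can occur, and hence $\emptyset_V$ lies in every conduct on $V$. Instantiating $H = \emptyset_{V(\cond{B})}$ in the hypothesis then forces $F \perp G \otimes \emptyset_{V(\cond{B})}$, while ranging over all $H \in \cond{B^\bot}$ gives $F \execution G \perp H$ for all such $H$, i.e. $F \execution G \in \cond{B}$. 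Combining the two directions yields the displayed equality. I expect this factoring step to be the only real obstacle: the rest is bookkeeping of vertex sets and the collapse of iterated orthogonals, whereas the forward direction genuinely relies on $\cond{B^\bot}$ being inhabited by $\emptyset_{V(\cond{B})}$, which is why the orthogonality-to-everything of the empty graph must be recorded.
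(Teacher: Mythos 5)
Your proof is correct and is exactly the argument the paper intends: the paper states this proposition without proof (\enquote{one easily obtains this characterisation}), deferring to the adjunction proposition stated immediately before it, which is precisely the engine of your argument --- and the same adjunction-chaining technique appears explicitly in the paper's later proof that $\{G\}^\perp \parr \{H\}^\perp = \{G \otimes H\}^\perp$. The one step you make explicit that the paper's \enquote{easily} glosses over --- that factoring the $H$-independent conjunct $F \perp G \otimes \emptyset_{V(\cond{B})}$ out of the statement quantified over $H \in \cond{B}^\bot$ requires $\cond{B}^\bot$ to be nonempty, witnessed by $\emptyset_{V(\cond{B})} \in \cond{B}^\bot$ --- is a genuine detail and you handle it correctly.
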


From these, one can define a categorical model of MLL, establishing that
the interpretation of MLL proofs by coherent graphs is sound. This can be 
further refined by showing that cut elimination is soundly represented by the
execution of two graphs.

Details of the proofs of these results are ghastly, as one needs to consider
\emph{delocations}. Indeed, as each conduct has a location, i.e. a specified
set of vertices, one needs to consider isomorphic copies of conducts when
defining all the basic operations. Morally, the categorical model consists of
the category whose objets are conducts and morphisms from $\cond{A}$ to
$\cond{B}$ are coherent graphs in $\cond{A\multimap B}$. In details, the 
conduct $\cond{A\multimap B}$ may not be defined if by some bad luck we
ended up with a situation where $V(\cond{A})=V(\cond{B})$. This is 
circumvented by defining the set of morphisms from $\cond{A}$ to
$\cond{B}$ are coherent graphs in $\cond{\phi(A)\multimap \psi(B)}$, where 
$\phi$ and $\psi$ rename the sets of vertices to ensure disjointness. This
leads to complications in the definition of composition. However, the 
details of the involved combinatorics is provided in full in the second author's
first paper on interaction graphs \cite{Seiller2012} and the addition of a 
coherence relation on edges of graphs do not change the proof: understanding
the word "graph" as meaning "coherent graph", the original proof can be used
word to word to obtain the following result, as only the associativity and 
adjunction properties are used.

Similarly, as the tensor product is not a total operation, it needs to be dealt 
with through delocations. Again, the method is exactly that of the previous 
work mentioned above.


\begin{theorem}
  \label{thm:star-autonomous}
  The following defines a $*$-autonomous category: conducts as objects, 
  coherent graphs as morphisms -- modulo delocations and composed by 
  execution --, delocated tensor as monoidal product, and $(-)^\bot$ as 
  dualization.
\end{theorem}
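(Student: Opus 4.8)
The plan is to verify the axioms of a $*$-autonomous category by assembling the pieces already established in the excerpt, reducing everything to the associativity and adjunction properties for coherent graphs. Since the paper explicitly says the combinatorial details are identical to those in \cite{Seiller2012} provided one reads ``graph'' as ``coherent graph'', my strategy is to isolate exactly which structural facts are needed and check that each has a coherent-graph counterpart already in hand, rather than redoing the delocation bookkeeping.

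First I would set up the underlying category: objects are conducts, and a morphism $\cond{A} \to \cond{B}$ is a coherent graph lying in $\cond{\phi(A) \multimap \psi(B)}$ for suitable renamings $\phi, \psi$ making the vertex sets disjoint. Composition is by execution, and I must check it is well-defined and associative modulo delocation. Well-definedness amounts to the characterisation of $\cond{A \multimap B}$ as the set of $F$ sending each $G \in \cond{A}$ to $F \execution G \in \cond{B}$ (with the side condition $F \perp G \otimes \emptyset_{V(\cond{B})}$): given $F \in \cond{A \multimap B}$ and $F' \in \cond{B \multimap C}$, one applies this twice to conclude $F' \execution (F \execution G) \in \cond{C}$, while the side condition for the composite is exactly what the adjunction proposition packages. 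Associativity of composition then descends from the associativity of execution (the proposition requiring $V(F) \cap V(G) \cap V(H) = \emptyset$), and identities are given by the coherent-graph analogues of the ``twist''/axiom-link graphs realising $\cond{A \multimap A}$. The monoidal structure uses the delocated tensor $\cond{A \otimes B}$, with associator and unitors built from graph isomorphisms acting as identity on vertices ($\equiv$); coherence of the monoidal structure (the pentagon and triangle) reduces to equalities of coherent graphs up to $\equiv$, which hold because the relevant execution operations are literally the same graphs with matched edge-coherence.

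Next I would install the $*$-autonomous structure. Dualization is $(-)^\bot$, which is involutive precisely because conducts satisfy $\cond{A} = \cond{A}^{\bot\bot}$, and the dualizing object is the appropriate $\parr$-unit; symmetry of the monoidal product follows from the symmetry of $\otimes$ on coherent graphs. The key natural isomorphism $\cond{A \multimap B} \cong \cond{B^\bot \multimap A^\bot}$ and the natural bijection between morphisms $\cond{A \otimes B} \to \cond{C}$ and $\cond{A} \to \cond{(B \multimap C)}$ both follow from the adjunction proposition together with the identities $\cond{A \parr B} = \{G \otimes H \mid G \in \cond{A}, H \in \cond{B}\}^\bot$ and $\cond{A \multimap B} = \cond{A^\bot \parr B}$; these are exactly the ingredients the text flags as giving monoidal closure. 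I would check naturality by verifying that the bijections commute with precomposition and postcomposition by execution, which again reduces to associativity.

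The main obstacle, as the excerpt itself warns, is not any single deep fact but the \emph{delocation bookkeeping}: because tensor and $\multimap$ are only partial (they require disjoint vertex sets), every construction must be performed on renamed isomorphic copies, and one must check that all the structural isomorphisms are independent of the chosen renamings and that the coherence relations transport correctly along these renamings. Concretely, the hard part is ensuring that composition is well-defined as an operation on $\equiv$-classes of coherent graphs modulo the choice of delocating bijections, and that the associators and the adjunction isomorphisms are natural once this quotient is taken. My claim is that none of this interacts with the coherence relation in any essential way: since plugging transports coherence via $\cohspace{G \plugging H} \simeq \cohspace{G} \with \cohspace{H}$ and execution induces its coherence from mutual coherence of paths, every renaming is a coherence-space isomorphism on edges, so the entire delocation argument of \cite{Seiller2012} transfers verbatim. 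Thus the theorem follows by invoking that argument, with the only substantive checks being the coherent-graph associativity and adjunction propositions already proved above.
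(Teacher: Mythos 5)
Your proposal is correct and takes essentially the same approach as the paper: the paper itself gives no direct verification of the categorical axioms, but argues that the delocation combinatorics of \cite{Seiller2012} uses only the associativity and adjunction properties, both established for coherent graphs, so that proof transfers word for word. Your more detailed unpacking of which axioms reduce to which facts (and your observation that renamings are coherence-space isomorphisms on edges, so delocation does not interact with coherence) is exactly the justification the paper leaves implicit.
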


Now, recall that we motivated the introduction of the coherence relation by
mentioning a \textsc{Sum} rule, which is supposed to be interpreted by the
incoherent sum $\incohplus$. But the category of coherent conducts and graphs
does \emph{not} interpret this rule: the commutations which distributes
\textsc{Sum} over arbitrary contexts do not correspond to equalities of coherent
graphs. However, one can get a model of MLL+\textsc{Sum} by identifying two
coherent graphs when they have the same \enquote{deterministic sub-graphs},
that is, the same maximal cliques. Note this identifies coherent graphs which
are indistinguishable w.r.t.\ orthogonality. I.e. this equivalence relation is finer
than what Seiller calls \emph{universal equivalence} \cite{seillerthese} and 
which is defined as $G\sim H$ if for all graph $F$ the set of alternating cycles
in $G\plugging F$ is equal to the set of alternating cycles in $H\plugging F$. 
The universal equivalence is quite natural, and implies that the equivalent 
graphs $F,G$ are observationally indistinguishable in \emph{all} Interaction
Graphs models definable. Thus, the quotient needed to represent faithfully
the \textsc{Sum} rule is quite natural to consider and provides a very 
satisfactory model, as does the quotient w.r.t. universal equivalence.

\begin{theorem}
  Let $\mathcal{R}$ be the following equivalence relation on graphs:
  $G\,\mathcal{R}\,H$ iff $V(G) = V(H)$ and there is a bijection between the
  maximal cliques (for inclusion) of $E(G)$ and those of $E(H)$, commuting with
  the source and target maps. (That is, $G$ and $H$ \enquote{have the same}
  maximal cliques.)

  $\mathcal{R}$ is a congruence on the hom-sets of the category in the previous
  theorem, and the quotient is a model of \emph{MLL+\textsc{Sum}}, interpreting
  \emph{\textsc{Sum}} as $\incohplus$.
\end{theorem}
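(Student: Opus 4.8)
The plan is to route everything through a single structural principle: for a coherent graph $G$, the family of maximal cliques of $\cohspace{G}$ — each read as an ordinary graph on $V(G)$ — is the \emph{deterministic content} of $G$, and every operation we care about acts on this content in a way that depends only on it. Write $\mathrm{MaxCl}(G)$ for this family, taken up to $\equiv$ (so respecting $s,t$); then $G\,\mathcal{R}\,H$ says exactly that $V(G)=V(H)$ and $\mathrm{MaxCl}(G)=\mathrm{MaxCl}(H)$. First I would prove the combinatorial lemma describing $\mathrm{MaxCl}$ under plugging and execution. Since the coherence on $G\plugging H$ is $\cohspace{G}\with\cohspace{H}$, any two edges from opposite sides are coherent, so a clique of $G\plugging H$ is a clique of $G$ together with a clique of $H$, whence $\mathrm{MaxCl}(G\plugging H)=\{C_G\cup C_H\mid C_G\in\mathrm{MaxCl}(G),\,C_H\in\mathrm{MaxCl}(H)\}$. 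A set $S$ of coherent alternating paths is a clique of $G\execution H$ iff $\bigcup_{\pi\in S}E(\pi)$ is a clique of $G\plugging H$, so every such $S$ sits inside the set $S_K$ of all good alternating paths contained in a single maximal clique $K=C_G\cup C_H$, and $S_K$ is precisely the edge set of $C_G\execution C_H$. Hence
\[ \mathrm{MaxCl}(G\execution H)=\max\nolimits_{\subseteq}\{\,C_G\execution C_H\mid C_G\in\mathrm{MaxCl}(G),\ C_H\in\mathrm{MaxCl}(H)\,\}. \]
The cleaner companions $\mathrm{MaxCl}(G\otimes H)=\{C_G\otimes C_H\}$ (no filtering, since disjoint vertices make the $C_G\otimes C_H$ pairwise incomparable) and $\mathrm{MaxCl}(G\incohplus H)=\mathrm{MaxCl}(G)\sqcup\mathrm{MaxCl}(H)$ (because $\oplus$ makes edges of the two summands mutually incoherent, so no clique can mix them) follow the same way.

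With the lemma in hand the congruence is immediate. Composition is execution after a delocation, i.e.\ a vertex-renaming, which transports $\mathrm{MaxCl}$ bijectively; the tensor of morphisms is $\otimes$; and duality acts trivially on the underlying graph of a morphism. Since all three right-hand sides above depend only on $\mathrm{MaxCl}(G)$ and $\mathrm{MaxCl}(H)$, from $G\,\mathcal R\,G'$ and $H\,\mathcal R\,H'$ we get $G\execution H\,\mathcal R\,G'\execution H'$ and $G\otimes H\,\mathcal R\,G'\otimes H'$. Thus $\mathcal R$ is a congruence on the hom-sets, and the quotient of a $*$-autonomous category (Theorem~\ref{thm:star-autonomous}) by a congruence is again $*$-autonomous with the induced structure, so it interprets MLL.

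It then remains to check that $\incohplus$, read on the quotient, validates the equational theory of the \textsc{Sum} rule. Commutativity and associativity already hold up to $\equiv$ (the edge sets $E(G)+E(H)$ and $E(H)+E(G)$ are canonically isomorphic, and likewise for $\oplus$), while idempotence correctly fails, since $G\incohplus G$ has twice as many maximal cliques as $G$ and so is not $\mathcal R$-related to it. The substance is distributivity of \textsc{Sum} over one-hole contexts, which I would reduce to its generators — tensoring, and cutting, with a fixed graph $F$ (the $\parr$ rule reduces to $\otimes$ at the graph level, and exchange/axiom are relabelings) — and close under these by induction, using that the operations are themselves $\mathcal R$-congruences. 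For tensor, $\mathrm{MaxCl}((G_1\incohplus G_2)\otimes F)=\{C\otimes C_F\}$ with $C$ ranging over $\mathrm{MaxCl}(G_1)\sqcup\mathrm{MaxCl}(G_2)$, which splits as $\mathrm{MaxCl}((G_1\otimes F)\incohplus(G_2\otimes F))$. For cut, the point is that a coherent alternating path in $(G_1\incohplus G_2)\plugging F$ must stay in one summand, so the $G_1$-side and $G_2$-side maximal cliques of the execution are mutually incomparable and equal $\mathrm{MaxCl}(G_1\execution F)$ and $\mathrm{MaxCl}(G_2\execution F)$; hence $(G_1\incohplus G_2)\execution F\,\mathcal R\,(G_1\execution F)\incohplus(G_2\execution F)$.

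I expect the main obstacle to be exactly this last verification, and the gap it exposes between $\equiv$ and $\mathcal R$. As raw coherent graphs the two sides of the distributivity law genuinely differ: a shared edge — e.g.\ a length-one $F$-path coherent with both summands — occurs once in $(G_1\incohplus G_2)\execution F$ but is duplicated, once per tag, in $(G_1\execution F)\incohplus(G_2\execution F)$; this is the concrete reason the unquotiented category does not model \textsc{Sum}. Because $\mathcal R$ inspects each maximal clique in isolation and never records whether an edge is shared between cliques or duplicated across them, the two sides become equal after quotient. Making this rigorous means carrying the bijection of maximal cliques (commuting with $s,t$) through the execution lemma while controlling the inclusion-maximal filter; the crucial point, which must be invoked carefully, is that the incoherence built into $\oplus$ prevents cliques from straddling the two summands, so that in each distributivity instance the filter never discards anything across summands and the bijection can be assembled summand by summand.
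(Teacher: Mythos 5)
The paper states this theorem without any proof, so there is nothing to compare your argument against; judged on its own merits, your maximal-clique calculus is the right tool, and your key lemma is correct: cliques of $G \plugging H$ are exactly the unions $C_G \cup C_H$, cliques of $G \execution H$ correspond to cliques of $G \plugging H$, and hence $\mathrm{MaxCl}(G\execution H)=\max_{\subseteq}\{E(C_G\execution C_H)\}$. The gap is the very next step, where you declare the congruence \enquote{immediate} because this right-hand side \enquote{depends only on} $\mathrm{MaxCl}(G)$ and $\mathrm{MaxCl}(H)$. It does not. The sets $E(C_G\execution C_H)$ are compared by inclusion inside the common ambient edge set $E(G\execution H)$, and which of them survive the $\max_{\subseteq}$ filter -- and how many distinct survivors there are -- depends on how the maximal cliques of $G$ overlap one another, which is exactly the shared-versus-duplicated-edge information that $\mathcal{R}$ discards. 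You sense this problem in your final paragraph, but you localize it to the distributivity check and leave the congruence claim untouched; it is the congruence itself that fails.

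Concretely: let $V(G)=V(G')=\{a,b,c,d\}$, and let $H$ have vertices $\{b,c\}$ and a single edge $h\colon b\to c$. Let $G$ have edges $e_1\colon a\to b$, $e_2\colon c\to d$, $f_1,f_2\colon b\to c$, all pairs coherent except $f_1\smile f_2$; its maximal cliques are $\{e_1,e_2,f_1\}$ and $\{e_1,e_2,f_2\}$. Let $G'$ have edges $e_1,\tilde e_1\colon a\to b$, $e_2,\tilde e_2\colon c\to d$, $f_1,f_2\colon b\to c$, with $\{e_1,e_2,f_1\}$ and $\{\tilde e_1,\tilde e_2,f_2\}$ as its two maximal cliques and all cross pairs incoherent. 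Then $G\,\mathcal{R}\,G'$, since each graph has exactly two maximal cliques, every one of them $\equiv$ to the graph with edges $a\to b$, $c\to d$, $b\to c$. But $G\execution H$ has the single edge $e_1he_2$ (the $f_i$ lie on no alternating path, since no edge of $H$ enters $b$, and $e_1\coh e_2$), hence \emph{one} maximal clique, whereas $G'\execution H$ has the two mutually incoherent edges $e_1he_2$ and $\tilde e_1h\tilde e_2$ (the mixed paths $e_1h\tilde e_2$ and $\tilde e_1he_2$ are incoherent), hence \emph{two} maximal cliques; no bijection exists, so $G\execution H$ and $G'\execution H$ are not $\mathcal{R}$-related. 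This is not an artifact of working with raw graphs rather than morphisms: taking $\cond{B}=\{H\}^\perp$ and $\cond{A}=\{\emptyset_{\{a,d\}}\}$ (a conduct, being the orthogonal of the set of all graphs on $\{a,d\}$), one checks that $G,G'\in\cond{A\multimap B}$ and $H\in\cond{B}^\perp$, i.e.\ $H$ is a morphism from $\cond{B}$ to the unit -- indeed every $K\in\cond{B}^\perp$ satisfies $K\perp H$ and so has no edge from $c$ to $b$, which rules out any alternating cycle with $G$ or $G'$ -- and the composite is exactly the execution above. So the gap cannot be closed as written: composition does not preserve $\mathcal{R}$, and your attempted proof in fact exposes a genuine delicacy in the statement itself (which the paper never proves); repairing it requires coarsening the equivalence, e.g.\ to orthogonality-indistinguishability or universal equivalence, which does identify the two executions above, rather than counting maximal cliques.
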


Of course, each equivalence class then has a canonical representative which is
just the incoherent sum of the maximal cliques. However, the use of a coherence
relation remains interesting since it potentially allows for much more concise
representatives -- recall that \emph{sparsity} of non-deterministic programs was
an initial motivation.

\section{Principal conducts and cographic proof nets}
\label{sec:principal}

\subsection{Principal conducts}

Our goal is now to study the interpretation of MLL types in the model defined by
Theorem~\ref{thm:star-autonomous} -- that is, without the quotient -- and
recover a notion of proof net together with an associated correctness criterion.
To do so we will look at the generation of conducts by single \enquote{tests} in
their orthogonal.

\begin{definition}
  Let $\cond{A}$ be a conduct. A \emph{generator} of $A$ is a graph $G \in
  \cond{A}^\perp$ such that $\cond{A} = \{G\}^\perp$.
\end{definition}

Let $\cond{A}$ be generated by a set $X$, that is, $\cond{A} = X^\perp$;
then it admits a generator, namely, the incoherent sum of all graphs in $X$
(even if $X$ is infinite, $\incohplus$ makes sense as an infinitary operation).
But this doesn't simplify much the description of $\cond{A}$. What we would like
is to find generators with a reasonable size, e.g.\ with a number of edges
polynomial in the number of vertices. Coherent interaction graphs allow us to
build a model of MLL consisting entirely of conducts with such generators. 

\begin{definition}
  A conduct $\cond{A}$ is \emph{principal} if it admits a generator without
  parallel edges, i.e.\ multiple edges with both the same sources and the same
  targets. It is \emph{bi-principal} if both $\cond{A}$ and $\cond{A}^\bot$ are
  principal.
\end{definition}

\begin{theorem}
  Principal conducts are closed under $\otimes$ and $\parr$, and bi-principal
  conducts constitute a non-trivial model of MLL.

  Thus, there is a model of MLL whose conducts all admit generators whose number
  of edges is at most \emph{quadratic} in the number of vertices (with no
  parallel edges allowed, it is bounded by the number of ordered pairs of
  vertices).
\end{theorem}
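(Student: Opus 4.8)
The plan is to produce, from generators $G_A$ of $\cond{A}$ and $G_B$ of $\cond{B}$, \emph{explicit} parallel-edge-free generators of $\cond{A \parr B}$ and of $\cond{A \otimes B}$; everything else is formal. Recall that $\cond{A}$ is principal exactly when $\cond{A} = \{G_A\}^\bot$ for some parallel-edge-free $G_A$, so that $\cond{A}^\bot = \{G_A\}^{\bot\bot}$, and that by de Morgan $(\cond{A \otimes B})^\bot = \cond{A^\bot \parr B^\bot}$. I would first isolate the one non-formal ingredient: that testing a graph against a \emph{single} generator is as good as testing it against the whole conduct that generator produces.

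The key lemma states: for $F$ with $V(F) = V(\cond{A}) \sqcup V(\cond{B})$, one has $F \perp G_A \otimes G_B$ if and only if $F \perp G \otimes G_B$ for every $G \in \cond{A}^\bot$. The nontrivial direction uses the adjunction to peel off $G_B$: $F \perp G \otimes G_B$ holds iff $F \perp \emptyset_{V(\cond{A})} \otimes G_B$ — a condition independent of $G$ — together with $(F \execution G_B) \perp G$. Under the hypothesis $F \perp G_A \otimes G_B$ the first condition holds, so the admissible $G$ form the set $\{F \execution G_B\}^\bot$, which is a \emph{conduct} containing $G_A$ and hence contains $\{G_A\}^{\bot\bot} = \cond{A}^\bot$. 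Applying this symmetrically on both factors rewrites $\{G_A \otimes G_B\}^\bot$ as $\{G \otimes H \mid G \in \cond{A}^\bot,\, H \in \cond{B}^\bot\}^\bot = \cond{A \parr B}$. Since the disjoint union $G_A \otimes G_B$ has no parallel edges, $\cond{A \parr B}$ is principal, using only that $\cond{A}$ and $\cond{B}$ are principal.

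For $\otimes$ I would dualize this ``disjoint union'' construction to a \emph{join}: the graph $W$ on $V(\cond{A}) \sqcup V(\cond{B})$ carrying the edges of $G_A$ and of $G_B$ \emph{together with all crossing edges}, endowed with the chordless coherence introduced in Section~\ref{sec:principal}. The tempting shortcut of using the incoherent sum $(G_A \otimes \emptyset_{V(\cond{B})}) \incohplus (\emptyset_{V(\cond{A})} \otimes G_B)$ fails: a short path analysis shows its orthogonal consists of those $F$ whose induced subgraph on $V(\cond{A})$ lies in $\cond{A}$ and whose induced subgraph on $V(\cond{B})$ lies in $\cond{B}$, which leaves the crossing edges of a test entirely unconstrained and is therefore strictly larger than $\cond{A \otimes B}$. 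The join is precisely the device that also constrains those crossing edges, and I expect the verification $\{W\}^\bot = \cond{A \otimes B}$ to be the main obstacle. The easy inclusion is that $W$ is orthogonal to every $G \otimes H$ with $G \in \cond{A}$, $H \in \cond{B}$, so $W \in (\cond{A \otimes B})^\bot$; the hard inclusion requires showing that any $F \perp W$ already lies in the biorthogonal closure of the tensors, the chordless coherence being used exactly to rule out the coherent alternating cycles that would otherwise travel through crossing edges. Parallel-edge-freeness of $W$ is immediate, so $\cond{A \otimes B}$ is principal, completing the closure statement.

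Finally I would assemble the model. Closure of principal conducts under $\otimes$ and $\parr$ is the content of the two previous steps. For bi-principality I use $(\cond{A \otimes B})^\bot = \cond{A^\bot \parr B^\bot}$ and $(\cond{A \parr B})^\bot = \cond{A^\bot \otimes B^\bot}$: since $\cond{A}$ and $\cond{B}$ are bi-principal their duals are principal, so the $\parr$- and $\otimes$-constructions applied to the dual generators make both $\cond{A \otimes B}$ and $\cond{A \parr B}$ \emph{and} their duals principal; as $(-)^\bot$ visibly preserves bi-principality, the bi-principal conducts are closed under all MLL connectives. They thus form a full subcategory of the $*$-autonomous category of Theorem~\ref{thm:star-autonomous}, which is a model of MLL; after checking that atoms may be interpreted by bi-principal conducts — for instance those generated by a single edge — non-triviality (in particular $\otimes \neq \parr$) is inherited. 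The quadratic bound is then automatic: a parallel-edge-free generator on $n$ vertices has at most one edge per ordered pair of vertices, hence at most $n^2$ edges.
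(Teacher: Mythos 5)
Your $\parr$ half is correct and is essentially the paper's own proof: the generator of $\cond{A \parr B}$ is the disjoint union $G_A \otimes G_B$, and your key lemma is exactly the paper's adjunction-peeling argument (there it appears as the chain $Z \perp G \otimes H \implies Z \execution G \perp H \implies Z \execution G \perp Y \implies Z \perp G \otimes Y \implies \cdots$, with biorthogonal closure invoked at each step). The genuine gap is in the $\otimes$ half: endowing the join $W$ with the chordless coherence \emph{overwrites} the coherence relations that $G_A$ and $G_B$ carry, and this changes the conduct being generated. Concretely, let $G_A$ have vertices $\{1,2,3,4\}$ and edges $e : 1 \to 2$, $f : 3 \to 4$ with $e \smile f$, and set $\cond{A} = \{G_A\}^\perp$. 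The graph $G$ with edges $a : 2 \to 3$, $b : 4 \to 1$, $a \coh b$, belongs to $\cond{A}$: every alternating cycle in $G \plugging G_A$ contains both $e$ and $f$, hence is incoherent. Now take any principal $\cond{B}$ with generator $G_B$. In your $W$, every crossing edge has one endpoint in $V(\cond{B})$, so no edge of $W$ is incident to both $e$ and $f$; the chordless coherence therefore declares $e \coh_W f$, and $e\,a\,f\,b$ becomes a \emph{coherent} alternating cycle in $(G \otimes \emptyset_{V(\cond{B})}) \plugging W$. Since $G \otimes \emptyset_{V(\cond{B})} \in \cond{A \otimes B}$, your claimed easy inclusion $W \in (\cond{A \otimes B})^\perp$ is already false, and $W$ does not generate $\cond{A \otimes B}$.

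The repair is the paper's construction: keep the given coherences and make only the crossing edges incoherent with everything, i.e.\ $\cohspace{W} = \cohspace{G_A} \oplus \cohspace{G_B} \oplus (\text{crossing edges, pairwise incoherent})$. The chordless coherence agrees with this only when $G_A$ and $G_B$ themselves already carry the chordless coherence; in the paper that is a property established \emph{afterwards} for the canonical generators built up from the one-vertex graph, not part of the definition, and closure of principal conducts must hold for arbitrary parallel-edge-free generators. Note also that even with the corrected $W$ your argument stops where the paper's proof does its real work: the inclusion $\{W\}^\perp \subseteq \cond{A \otimes B}$, which you defer as \enquote{the main obstacle}, is short once the coherence is right -- any $F$ containing a crossing edge forms a coherent alternating $2$-cycle with one of the crossing edges of $W$, so $F \perp W$ forces $F$ to split as $X \otimes Y$; coherent cycles lying inside each part then give $X \perp G_A$ and $Y \perp G_B$, hence $F$ is a tensor of elements of $\cond{A}$ and $\cond{B}$ (an internal completeness result, a fortiori membership in $\cond{A \otimes B}$). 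Finally, your proposed atomic interpretation (conducts generated by a single edge) still requires an argument that the \emph{dual} is principal; the paper sidesteps this by using the conduct on a one-point vertex set.
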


\begin{remark}
  Moreover, given a fixed assignment of atoms to bi-principal conducts, the
  number of vertices -- and thus the number of edges -- of a generator is
  polynomial in the size of the interpreted formula.
\end{remark}

The non-triviality comes from the fact that the unique conduct on $\{*\}$ is
bi-principal. This conduct does not contain any coherent graph, but this tells
us that to get a truly non-empty bi-principal conduct, we can just take any
provable formula and interpret every atom by $\{*\}$. The rest of the theorem is
a consequence of the following explicit constructions on generators.

\begin{proposition}
  For all coherent graphs $G$ and $H$ with $V(G) \cap V(H) = \emptyset$, we 
  have $\{G\}^\perp\parr \{H\}^\perp = \{G \otimes H\}^\perp$ and $\{G\}^\perp 
  \otimes\{H\}^\perp = \{G \parr H\}^\perp$, where $V(G \parr H) = V(G) \sqcup V(H)$, 
  and \[ \cohspace{G \parr H} = \cohspace{G} \oplus \cohspace{H} \oplus (V(G) \times
    V(H) \sqcup V(G) \times V(H), \emptyset) \]
  That is, for each $u \in V(G)$ and $v \in V(H)$, we add both $(u,v)$ and $(v,u)$ and
  make them incoherent with all other edges.
\end{proposition}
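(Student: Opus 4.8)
The plan is to establish the two displayed equalities separately. Both sides of each are conducts of the form $X^{\perp}$, so it suffices to compute the orthogonal of a single generator and to exhibit the two inclusions; in each case one inclusion is immediate because the generator on the right lies in the generating family on the left, and the other carries the content. I will use freely that the category is $*$-autonomous (Theorem~\ref{thm:star-autonomous}), so that $\cond{A}\parr\cond{B}=(\cond{A}^{\perp}\otimes\cond{B}^{\perp})^{\perp}$, together with the tensor--execution adjunction. For a coherent graph $F$ on $V(G)\sqcup V(H)$ I abbreviate $F_G=F\execution\emptyset_{V(H)}$ and $F_H=F\execution\emptyset_{V(G)}$; unfolding the definition of execution, $F_G$ is just the coherent subgraph of $F$-edges lying entirely within $V(G)$, and one checks that $F\perp K\otimes\emptyset_{V(H)}\iff F_G\perp K$ for every coherent graph $K$ on $V(G)$, since an alternating cycle against $K\otimes\emptyset_{V(H)}$ cannot leave $V(G)$.

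For $\{G\}^{\perp}\otimes\{H\}^{\perp}=\{G\parr H\}^{\perp}$, I would first describe $\{G\parr H\}^{\perp}$ explicitly. The decisive structural fact is that, since $\cohspace{G\parr H}=\cohspace{G}\oplus\cohspace{H}\oplus(\,\cdots,\emptyset)$ has its three blocks pairwise incoherent and its transversal block carries the empty coherence, every clique of $\cohspace{G\parr H}$ lies inside $E(G)$, inside $E(H)$, or is a single transversal edge. Hence in any coherent alternating cycle of $F\plugging(G\parr H)$ the $(G\parr H)$-edges all lie in $G$, all lie in $H$, or reduce to one transversal; analysing the three cases, the cycle is a coherent alternating cycle of $F_G\plugging G$, or of $F_H\plugging H$, or a length-two cycle through a transversal, the last existing exactly when $F$ carries an edge between $V(G)$ and $V(H)$ (both orientations $(u,v)$ and $(v,u)$ being present). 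This yields the characterisation: $F\perp G\parr H$ iff $F$ has no edge joining $V(G)$ to $V(H)$, $F_G\perp G$ and $F_H\perp H$. The easy inclusion follows at once, since any $K\otimes L$ with $K\in\{G\}^{\perp}$ and $L\in\{H\}^{\perp}$ meets this description; taking double orthogonals gives $\{G\}^{\perp}\otimes\{H\}^{\perp}\subseteq\{G\parr H\}^{\perp}$.

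The reverse inclusion $\{G\parr H\}^{\perp}\subseteq\{G\}^{\perp}\otimes\{H\}^{\perp}$ is the heart of this equality, and I would settle it by a coherence-monotonicity argument. Let $F$ satisfy the characterisation and let $M$ be orthogonal to every $K\otimes L$ with $K\in\{G\}^{\perp}$, $L\in\{H\}^{\perp}$. Since $F_G\in\{G\}^{\perp}$ and $F_H\in\{H\}^{\perp}$, in particular $M\perp F_G\otimes F_H$. Now $F$ and $F_G\otimes F_H$ have literally the same underlying graph---$F$ having no crossing edges---and differ only in coherence: the tensor makes all $F_G$-edges coherent with all $F_H$-edges, so $\coh_F\subseteq\coh_{F_G\otimes F_H}$. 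Therefore every coherent alternating cycle of $F\plugging M$ is already a coherent alternating cycle of $(F_G\otimes F_H)\plugging M$, of which there are none; hence $F\perp M$, and $F\in\{G\}^{\perp}\otimes\{H\}^{\perp}$. This monotonicity step is the main obstacle and is exactly where the coherence structure is used: replacing the arbitrary test $M$ by the single tensor $F_G\otimes F_H$ is legitimate only because shrinking the coherence relation can only destroy alternating cycles.

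For $\{G\}^{\perp}\parr\{H\}^{\perp}=\{G\otimes H\}^{\perp}$, de Morgan rewrites the left-hand side as $\{K\otimes L\mid K\in\{G\}^{\perp\perp},\,L\in\{H\}^{\perp\perp}\}^{\perp}$; as $G\in\{G\}^{\perp\perp}$ and $H\in\{H\}^{\perp\perp}$ one inclusion is free, and the other reduces to: $F\perp G\otimes H$ implies $F\perp K\otimes L$ for all $K\in\{G\}^{\perp\perp}$, $L\in\{H\}^{\perp\perp}$. I would prove this by an adjunction chase. From $F\perp G\otimes H$ the adjunction gives $F\perp G\otimes\emptyset_{V(H)}$, i.e.\ $F_G\perp G$, whence $F_G\in\{G\}^{\perp}$ and $F_G\perp K$, i.e.\ $F\perp K\otimes\emptyset_{V(H)}$. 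Reading the adjunction through the other factor gives $F\execution H\perp G$ and $F\perp\emptyset_{V(G)}\otimes H$; since $F\execution H\in\{G\}^{\perp}$ we get $F\execution H\perp K$, and reassembling yields $F\perp K\otimes H$, whence $F\execution K\perp H$ by the adjunction again. Thus $F\execution K\in\{H\}^{\perp}$, so $F\execution K\perp L$ because $L\in\{H\}^{\perp\perp}$; a final application of the adjunction combines $F\perp K\otimes\emptyset_{V(H)}$ and $F\execution K\perp L$ into $F\perp K\otimes L$. The delicate point here is to marshal the adjunction so that the defining property of $\{G\}^{\perp\perp}$ and $\{H\}^{\perp\perp}$ is invoked precisely against the graphs $F_G$, $F\execution H$ and $F\execution K$ produced along the way, and to treat edge orientations correctly in the cycle analysis for $G\parr H$ so that the transversal case genuinely covers both directions.
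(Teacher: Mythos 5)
Your proof is correct and follows essentially the same route as the paper's: the same de Morgan rewriting and adjunction chase for $\{G\}^\perp \parr \{H\}^\perp = \{G \otimes H\}^\perp$ (the paper compresses your chain into an \enquote{etc.}), and the same transversal-edge analysis followed by double-orthogonal closure for $\{G\}^\perp \otimes \{H\}^\perp = \{G \parr H\}^\perp$. If anything, your explicit coherence-monotonicity step ($F$ and $F_G \otimes F_H$ share the same underlying graph with $\coh_F \subseteq \coh_{F_G \otimes F_H}$, so coherent cycles against $F$ are coherent cycles against $F_G \otimes F_H$) carefully justifies a point the paper elides when it asserts that every graph orthogonal to $G \parr H$ \enquote{can be written as $X \otimes Y$}.
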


\begin{proof}
  First, note that $\{G\}^\perp \parr \{H\}^\perp = \{X \otimes Y \mid X \in
  \{G\}^{\perp\perp}, Y \in \{H\}^{\perp}\}^\perp$ by definition. By instantiating $X
  = G$, $Y = H$ and using the contravariance of $(-)^\perp$, we get $\{G\}^\perp
  \parr \{H\}^\perp \subseteq \{G \otimes H\}^\perp$.
  
  Conversely, let $X \in \{G\}^{\perp\perp}$, $Y \in \{H\}^{\perp\perp}$ and $Z
  \perp G \otimes H$; our goal is to show $Z \perp X \otimes Y$. To do so, we
  use the adjunction: $Z \perp G \otimes H \implies Z \execution G \perp H
  \implies Z \execution G \in \{H\}^\perp \implies Z \execution G \perp Y$
  (because $Y \in \{H\}^{\perp\perp}$) $\implies Z \perp G \otimes Y$ etc.\ and
  in the end, $Z \perp X \otimes Y$.

  As for $\{G \parr H\}^\perp$, the edges between $V(G)$ and $V(H)$ ensure that
  all graphs orthogonal to $G \parr H$ can be written as $X \otimes Y$ with
  $V(X) = V(G)$ and $V(Y) = V(H)$. $G \parr H \perp X \otimes Y$ entails $X
  \perp G$ and $Y \perp H$ (by looking at the coherent alternating cycles with
  vertices in $V(G)$, or in $V(H)$). Thus, we have $\{G \parr H\}^\perp
  \subseteq \{G\}^\perp \otimes \{H\}^\perp$. The reverse direction is proven by
  applying $(-)^{\bot\bot}$ to the inclusion $\{X \otimes Y \mid X \in
  \{G\}^\perp, Y \in \{H\}^\perp \} \subseteq \{G \parr H\}^\perp$ (note that
  $S^{\bot\bot\bot} = S^\bot$ for any set $S$). Note that this proof gives an
  \enquote{internal completeness} result as a bonus: every graph in this conduct
  decomposes as a tensor.
\end{proof}

\subsection{Cographic proof nets}

Among bi-principal conducts, we have all those built from $\{*\}$ using the
multiplicative connectives. Let us take a closer look at what those look like.
The following shows that the coherence relation of the canonical generator of
such a conduct is entirely determined by its vertices and edges.

\begin{definition}
  Let $G$ be a graph. The \emph{chordless coherence} relation is defined as
  follows: $e, f \in E(G)$ are incoherent if and only if either $e$ and $f$ are
  incident, or some $g \in E(G)$ is incident to both $e$ and $f$.
\end{definition}

\begin{proposition}
  Any generator obtained from the one-vertex graph by the constructions of the
  previous proposition is equipped with the chordless coherence relation.
\end{proposition}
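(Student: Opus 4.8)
The plan is to argue by structural induction on the way the generator is built. By the previous proposition, the generators under consideration are exactly those obtained from the one-vertex graph by closing under the two graph operations $G,H \mapsto G \otimes H$ and $G,H \mapsto G \parr H$ (the former producing a generator of $\{G\}^\perp \parr \{H\}^\perp$, the latter of $\{G\}^\perp \otimes \{H\}^\perp$); since the chordless coherence is manifestly invariant under relabelling of vertices, the delocations needed to enforce $V(G) \cap V(H) = \emptyset$ play no role. The induction hypothesis is that $\coh_G$ and $\coh_H$ already coincide with the chordless coherences of $G$ and $H$, and the goal in each step is to check that the coherence relation prescribed by the previous proposition agrees, pair of edges by pair of edges, with the chordless coherence of the compound graph. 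The base case is immediate: the one-vertex generator carries at most one edge, so there are no two distinct edges to compare and the claim holds vacuously.

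The tensor case is the easy half of the induction. In $G \otimes H$ every edge keeps both endpoints inside a single block ($V(G)$ or $V(H)$), so no edge can be incident to an edge of $G$ \emph{and} an edge of $H$. Consequently the chordless relation restricted to $E(G)$ (resp.\ $E(H)$) is just the chordless relation of $G$ (resp.\ $H$), hence equal to $\coh_G$ (resp.\ $\coh_H$) by hypothesis; and any $e \in E(G)$, $f \in E(H)$ are neither incident nor admit a common neighbour, so they are chordless-coherent. This is exactly the $\with$ prescribed for $\cohspace{G \otimes H}$.

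The $\parr$ case carries the real content and is where I expect the difficulty, because one must account for the dense bundle of crossing edges joining every vertex of $V(G)$ to every vertex of $V(H)$ in both directions. I would split into three families of pairs. First, for two edges of $E(G)$: a crossing edge incident to both would have to share its unique $V(G)$-endpoint with each of them, which would force those two $G$-edges to share a vertex and hence already be incident; so the crossing edges produce no new common neighbours, the chordless relation inside $E(G)$ is unchanged and matches $\coh_G$ (symmetrically for $E(H)$). Second, for $e \in E(G)$ and $f \in E(H)$: they are not incident, but the crossing edge joining a $V(G)$-endpoint of $e$ to a $V(H)$-endpoint of $f$ is incident to both, making them chordless-incoherent — exactly the $\oplus$ that separates $G$-edges from $H$-edges. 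Third, for a crossing edge against any other edge: if the two do not already share a vertex, one exhibits a further crossing edge incident to both (take the $V(G)$-endpoint of one and the $V(H)$-endpoint of the other, a pair always joined by a crossing edge), so each crossing edge is chordless-incoherent with everything else, matching the clause of the $\oplus$ that isolates the crossing edges.

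The crux, and the only step that is not routine bookkeeping, is the first family above: I must verify that adding the full bipartite bundle of crossing edges does not accidentally render two previously coherent edges of $G$ incoherent by acting as a shared neighbour. The observation that a crossing edge has a single endpoint in $V(G)$, and can therefore only witness incidence of two $G$-edges that were already incident, is precisely what makes the induction go through; it is also the reason the construction was set up with crossing edges present between \emph{all} pairs and in both directions, rather than through any sparser connection.
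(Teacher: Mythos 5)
Your proof is correct and follows exactly the route the paper intends: the paper's entire proof is the remark that ``the proof is a straightforward induction,'' and your structural induction on the $\otimes$/$\parr$ constructions, with the key observation that a crossing edge has a unique endpoint in $V(G)$ and so can only witness incidence of $G$-edges that were already incident, is precisely that induction carried out in detail. No gaps; the case analysis (edges within a block, edges across blocks, crossing edges against everything) is complete and each case matches the coherence prescribed by the preceding proposition.
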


The proof is a straightforward induction. Our choice of naming is justified by the following.

\begin{proposition}
  Let $G$ and $H$ be two graphs equipped with the chordless coherence relation.
  An alternating cycle in $G \plugging H$ is coherent if and only it is simple and
  there is no edge of $G \plugging H$ between two non-consecutive vertices of the
  cycle (such an edge is called a \emph{chord} in graph theory). The same holds
  for alternating paths in $G \plugging H$ with endpoints in $G \symmdiff H$.
\end{proposition}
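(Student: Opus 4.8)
The plan is to prove both equivalences by isolating the only pairs of edges that can possibly witness incoherence. Since $\cohspace{G \plugging H} \simeq \cohspace{G} \with \cohspace{H}$, any edge of $E(G)$ is coherent with any edge of $E(H)$; and alternation guarantees that two edges lying on the same side are never consecutive along the cycle (or path). Hence I would first record the reduction that an alternating cycle (or path) is coherent if and only if every pair of its edges both belonging to $E(G)$, and every pair both belonging to $E(H)$, is coherent for the chordless relation of that side. From this point on the argument only ever compares two edges $e, e'$ lying, say, in $E(G)$.

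For the direction coherent $\Rightarrow$ simple and chordless, simplicity is inherited from the proposition of Section~\ref{sec:Simple}: the chordless coherence is contained in the simple coherence, so a coherent alternating path or cycle is simple. For chordlessness I argue contrapositively. Suppose some edge $g$ of $G \plugging H$ joins two non-consecutive vertices $u, w$ of the cycle; say $g \in E(G)$ (for a path one uses that an edge of $E(G)$ has both endpoints in $V(G)$, so that the path edge at an endpoint lying in $V(G) \setminus V(H)$ is itself in $E(G)$). In a simple alternating cycle each vertex carries exactly one $G$-edge of the cycle; write $e$, $e'$ for the ones at $u$ and at $w$. Non-consecutiveness prevents $e = e'$ (the two endpoints of a single edge are consecutive), and $g$ is incident to $e$ at $u$ and to $e'$ at $w$; by the definition of the chordless coherence this gives $e \smile e'$, so the cycle is not a clique, i.e.\ not coherent.

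For the converse, simple and chordless $\Rightarrow$ coherent, I would take two same-side edges $e, e' \in E(G)$ of the cycle and show they are coherent. Simplicity together with alternation already forbids any vertex from carrying two $G$-edges of the cycle, so $e$ and $e'$ are not incident. It remains to exclude a witness $g \in E(G)$ incident to both; such a $g$ joins an endpoint $u$ of $e$ to an endpoint $w$ of $e'$, and since $e, e'$ are non-incident, $u \neq w$, so $g$ is a non-cycle edge between two distinct cycle vertices, which the hypothesis forbids between non-consecutive vertices.

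The hard part is precisely this last step: one must be sure that the witness $g$ joins two \emph{non-consecutive} vertices. The delicate case is when $u$ and $w$ are consecutive, joined by a cycle edge $c$; by alternation $c$ then lies on the opposite side from $e, e'$, so $g$ and $c$ are parallel edges of $G \plugging H$ with the same endpoints. To rule this configuration out I would invoke the precise reading of \enquote{chord} as \emph{any} non-cycle edge between two cycle vertices, together with the fact that the generators of interest — obtained from the one-vertex graph, hence carrying the chordless coherence with no parallel edges — produce no such spurious $g$ between consecutive vertices. This is the single place where the multigraph nature of our graphs must be handled with care; once it is settled, both the cycle and the path statements follow, the path case being identical after the endpoint bookkeeping already used above.
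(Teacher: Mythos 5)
Your first three paragraphs are essentially the paper's own argument: the same reduction to same-side pairs, simplicity imported from the proposition of Section~\ref{sec:Simple} (chordless coherence being included in simple coherence), and the same chord-implies-incoherence step --- the paper likewise takes the two $G$-cycle-edges $e_u, e_v$ at the chord's endpoints and uses non-consecutiveness to get $e_u \neq e_v$. Your last paragraph then puts the finger on exactly the case that the paper's converse --- a one-sentence claim that any incoherence exhibits \enquote{either a repeated vertex or a chord} --- silently skips.

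But your way of closing that case fails, and in fact no proof can close it, because the proposition as literally stated is false. The spurious witness $g$ is an edge of $G$ sharing its endpoints with a cycle edge $c$ of $H$: a \emph{cross-graph} parallel pair, which the absence of parallel edges inside a generator does nothing to exclude, and which does occur for the generators of interest --- precisely when an axiom link joins two $\otimes$-related atoms. Concretely, let $G$ be the canonical generator of $(A \parr B) \otimes (C \parr D)$ (Figure~\ref{fig:cograph}; edges $AC$, $AD$, $BC$, $BD$ with the chordless coherence) and let $H$ be the matching $\{AC, BD\}$, i.e.\ the cographic proof structure of $(A \parr B) \otimes (A^\perp \parr B^\perp)$ with axiom links $A$--$C$ and $B$--$D$. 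The alternating cycle through $D, A, C, B$ using $AD, CB \in E(G)$ and $AC, BD \in E(H)$ is simple, and $G \plugging H$ has no edge between either pair of non-consecutive vertices ($\{D,C\}$ and $\{A,B\}$); yet it is incoherent, since $AC \in E(G)$ is incident to both $AD$ and $CB$, whence $AD \smile_G CB$. So \enquote{simple and chordless} does not imply \enquote{coherent} under the stated reading of \enquote{chord}. The viable repair is your first suggestion, but it must be made in the statement, not smuggled into the proof: define a chord as \emph{any} edge of $G \plugging H$ not in the cycle whose endpoints both lie on the cycle, and assume $G$ and $H$ individually have no parallel edges (true of cographs and matchings); then both of your directions go through, the delicate $g$ now being a chord by hypothesis. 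Appealing to \enquote{generators of interest} instead is doubly unsound: the proposition quantifies over arbitrary graphs equipped with the chordless coherence, and the configuration arises for those very generators, as the example shows. (The paper's subsequent theorem is nonetheless unharmed: whenever a simple cycle with no chord between non-consecutive vertices is incoherent, the offending pair $g$, $c$ forms by itself a coherent alternating 2-cycle, so the \emph{existence} of a coherent alternating cycle is still equivalent to the existence of a simple chordless one, and orthogonality still matches Retoré's criterion.)
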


\begin{proof}
  Let $\pi$ be an alternating cycle in $G \plugging H$, with a chord $e \in \pi$,
  and let $u = s(e)$, $v = t(e)$. Suppose w.l.o.g.\ that $e \in E(G)$. Since
  $\pi$ is alternating, there exist $e_u, e_v \in E(G) \cap \pi$ such that $u
  \in \{s(e_u), t(e_u)\}$ and similarly for $v$. Since $u$ and $v$ are
  non-consecutive, $e_u \neq e_v$. Then, because $e$ is incident to both, $e_u
  \smile_G e_v$, making $\pi$ incoherent.
  Conversely, if $\pi$ is incoherent, by finding the cause of incoherence
  between two edges of $\pi$ according to the definition, one can exhibit either
  a repeated vertex or a chord.
\end{proof}

Thus, it remains only to understand the set of graphs (without coherence)
generated by $\otimes$ and $\parr$. First, one can remark that those canonical
generators are all symmetric directed graphs; it will be more convenient to
consider them as \emph{undirected} graphs. In fact, it is common to associate an
undirected graph to a classical propositional formula by interpreting $\lor$ as
a disjoint union and $\land$ as the dual operation (see
e.g.~\cite{chaiken1989}). This is exactly what we do with MLL formulae; note
that the $\parr$ on conducts is interpreted as a $\otimes$ on generators, and
thus corresponds to the classical disjunction -- which is indeed the classical
counterpart of $\parr$. (See the example of Figure~\ref{fig:cograph}.) The
graphs obtained this way are \emph{cographs}~\cite{corneil1981}, a well-known
class of undirected graphs with many different characterizations.

\begin{figure}
  \centering
    \begin{tikzpicture}
      \node[bigvertex] (a) at (0,2.3) {$A$};
      \node[bigvertex] (b) at (1,2.3) {$B$};
      \node[bigvertex] (c) at (2,2.3) {$C$};
      \node[bigvertex] (d) at (3,2.3) {$D$};
      \node[bigvertex] (x) at (0.5,1.1) {{\color{red}$\parr$}};
      \node[bigvertex] (y) at (2.5,1.1) {{\color{red}$\parr$}};
      \node[bigvertex] (z) at (1.5,0) {{\color{red}$\otimes$}};

      \draw[thick] (x) -- (z);
      \draw[thick] (y) -- (z);
      \draw[thick] (x) -- (a);
      \draw[thick] (x) -- (b);
      \draw[thick] (y) -- (c);
      \draw[thick] (y) -- (d);
    \end{tikzpicture}\qquad\qquad\qquad\begin{tikzpicture}
      \node[bigvertex] (a) at (0,0) {$A$};
      \node[bigvertex] (b) at (0,2) {$B$};
      \node[bigvertex] (c) at (2,2) {$C$};
      \node[bigvertex] (d) at (2,0) {$D$};
      \draw[thick,red] (a) -- (c);
      \draw[thick,red] (a) -- (d);
      \draw[thick,red] (b) -- (c);
      \draw[thick,red] (b) -- (d);
    \end{tikzpicture}
  \caption{The cograph associated to the MLL formula $(A \parr B) \otimes (C
    \parr D)$}
  \label{fig:cograph}
\end{figure}
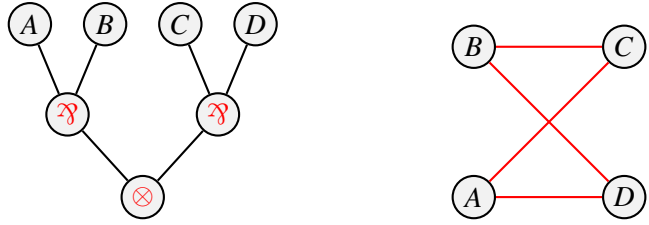

In the study of Multiplicative Linear Logic, the cograph associated to a formula
has previously been used in Retoré's version of proof nets, which he called
\enquote{R\&B-cographs}~\cite{retore2003}:
\begin{definition}
  A \emph{cographic proof} is a vertex set $V$, together with a pair of graphs
  $(G,M)$ such that $V(G) = V(M) = V$, $G$ is a cograph and $M$ is a matching;
  additionally, $V$ is labeled by MLL atoms in such a way that if two vertices
  are adjacent in $M$, then their labels are dual.
\end{definition}

This is indeed a faithful representation of proofs:
\begin{itemize}[nolistsep,noitemsep]
\item from a cograph, one can recover the original formula up to
  associativity/commutativity of $\parr/\otimes$,
\item we saw earlier that the matching associated to a proof captures its
  computational content.
\end{itemize}

But all cographic proofs do not come from proofs -- actually, they correspond to
potentially incorrect \emph{proof structures}, so this is really just another
way to describe proof nets (but one that arises naturally from our GoI
construction!). Retoré also proposed a correctness criterion for cographic
proofs; it was later rediscovered by Ehrhard~\cite{ehrhard2014} and also
underlies Hughes's \enquote{combinatorial proofs} for classical
logic~\cite{hughes2006}:

\begin{theorem}[{\cite{retore2003,ehrhard2014}}]
  Let $(G,M)$ be a cographic proof and $\Phi$ the MLL formula whose cograph is
  $G$. Then $M$ corresponds to a proof of $\Phi$ in MLL \emph{with the Mix rule}
  if and only if there is no chordless alternating cycle between $G$
  and $M$.
\end{theorem}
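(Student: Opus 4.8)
The plan is to connect the correctness criterion for cographic proofs directly to the GoI orthogonality machinery developed in the paper, rather than to reprove Retoré's theorem from scratch. The key observation is that the chordless-coherence proposition (the one immediately preceding this statement) tells us exactly when a coherent alternating cycle exists in $G \plugging M$: namely, coherent alternating cycles under the chordless coherence correspond precisely to \emph{chordless} alternating cycles in the ordinary graph-theoretic sense. So the hypothesis \enquote{there is no chordless alternating cycle between $G$ and $M$} is \emph{definitionally} the orthogonality condition $G \perp M$ once we equip $G$ with the chordless coherence relation. I would state this translation as the first step, making the whole theorem an assertion that $G \perp M$ holds if and only if $M$ interprets a $\mathbf{Mix}$-proof of $\Phi$.

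Next I would exploit the fact that $G$, being the cograph of $\Phi$, is (up to the undirected/symmetric-directed identification) the canonical generator of the conduct interpreting $\Phi^\bot$, as established by the proposition on the $\otimes/\parr$ constructions and the chordless-coherence proposition. Concretely, $\{G\}^\perp$ is the conduct $\cond{\Phi}$ (with atoms interpreted by the one-vertex conduct $\{*\}$), so the condition $G \perp M$ is exactly the statement that the matching $M$ belongs to $\cond{\Phi} = \{G\}^\perp$. The task then reduces to a soundness-and-completeness statement: a matching $M$ on the atoms of $\Phi$ lies in the interpretation $\cond{\Phi}$ if and only if it arises from an actual $\mathbf{Mix}$-proof of $\Phi$. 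I would handle this by structural induction on $\Phi$, using the explicit generator constructions: for $\otimes$ and $\parr$ the generator is built by the proposition on $\{G\}^\perp \otimes \{H\}^\perp$ and $\{G\}^\perp \parr \{H\}^\perp$, and the adjunction together with the internal completeness remark (\enquote{every graph in this conduct decomposes as a tensor}) lets me split a matching orthogonal to a $\parr$-generator into its two halves, matching the bottom-up reading of the sequent calculus rules.

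The soundness direction (every matching from a $\mathbf{Mix}$-proof passes the test) follows from the general fact, recalled in Section~\ref{sec:ig}, that plugging the matchings of two correctly-typed MLL proofs never produces an alternating cycle; I would lift this to the coherent setting and observe that the $\mathbf{Mix}$ rule, which only adds disjoint components, introduces no new cycles either. Completeness — that passing the test forces the existence of a proof — is where the main obstacle lies: I must show that any matching in $\{G\}^\perp$ can be assembled into a proof. The inductive step at a $\parr$ node is routine given internal completeness, but the $\otimes$ node is delicate, because a $\mathbf{Mix}$-proof of $A \otimes B$ need not split the matching cleanly across the two premises in the way a pure MLL proof would. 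The hard part will therefore be reconciling the GoI decomposition with the presence of $\mathbf{Mix}$: I expect to need a sequentialization-style argument showing that absence of chordless alternating cycles guarantees one can always find a splitting tensor (a \enquote{splitting $\otimes$}) or a $\mathbf{Mix}$ cut, exactly as in Retoré's original sequentialization proof. Rather than reconstruct that combinatorial argument in full, I would cite Retoré's theorem for the equivalence between his criterion and $\mathbf{Mix}$-provability, and confine my contribution to the first two steps — the identification of his chordless-cycle criterion with our coherent orthogonality $G \perp M$ — thereby recovering his result as an instance of the GoI framework.
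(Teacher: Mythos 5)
Your proposal matches the paper's own treatment of this theorem: the paper likewise reads the chordless-cycle condition as the coherent orthogonality $G \perp M$ (via the chordless-coherence proposition), recovers the \enquote{only if} direction by interpreting the proof in the model with atoms sent to $\{*\}$ (Mix being available since $\cond{A \otimes B} \subseteq \cond{A \parr B}$), and explicitly defers the \enquote{if} direction --- sequentialization --- to Retoré's combinatorial argument, exactly as you do. Two cosmetic slips worth fixing: by the paper's definition $G$ is a generator of $\cond{\Phi}$ itself (it is an \emph{element} of $\cond{\Phi}^\perp$), and in your abandoned induction sketch the roles are swapped --- internal completeness is what makes the $\otimes$ case of the formula routine, while it is under $\parr$ (where multi-formula sequents and splitting-tensor arguments appear) that the hard combinatorics hide.
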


This \enquote{no chordless alternating cycle} condition is just orthogonality
between coherent interaction graphs: $G \perp M$ where $G$ is equipped with the
chordless coherence relation! The \enquote{only if} condition can be recovered
immediately by interpreting the corresponding proof in our model, sending the
atoms of $\Phi$ to $\{*\}$ so that $G$ is exactly the canonical generator of the
bi-principal conduct corresponding to $\Phi$, and $M$ is the element of this
conduct corresponding to the proof. (For this, our model needs to interpret the
Mix rule; this is the case since $\cond{A \otimes B} \subseteq \cond{A \parr B}$
holds for any two conducts $\cond{A,B}$.)

As for the \enquote{if} direction, it amounts to a \emph{sequentialization
  theorem} on proof nets, and thus it can't simply fall out of the above work on
coherent graphs: a proof of sequentialization must necessarily involve some hard
combinatorics. But translated into our GoI model, it almost gives a \emph{full
  completeness} theorem. Any $M$ which is orthogonal to a cograph $G$ is a proof
of the corresponding formula, up to two morphological conditions: $M$ must be a
matching, and its edges must have dual atoms as endpoints.

\section{Perspectives}

The obvious direction for future work is to extend the model to larger fragments
of linear logic. One could extend our model of MELL to handle exponentials, for
instance using \emph{graphings} such as in previous work~\cite{Seiller2017}. But
to keep finiteness and sparsity properties, it might be preferable to find a
way to use finite objects instead of measure spaces of infinite cardinality.
The ability to take non-deterministic sums in our model also suggests an
extension to differential linear logic~\cite{ehrhard2006,ehrhard2017}. It would
be interesting to understand the \enquote{correctness criterion} for
differential nets with promotion boxes arising from GoI orthogonality; we might 
recover Pagani's visible acyclicity~\cite{pagani2012} which is already 
characterised interactively. 

The cographic representation of formulae and proofs could also lead to the
discovery of new operations in a multiplicative setting. For instance, \emph{modular 
decomposition} of graphs (see e.g.~\cite{modular}) extends to arbitrary graphs the 
correspondence between cographs and formulae. This may be used to define 
generalized multiplicative connectives.

\nocite{*}
\bibliographystyle{eptcs}
\bibliography{coherent-graphs}

\end{document}